\documentclass[pdflatex,sn-mathphys-num]{sn-jnl}

\usepackage{graphicx} 
\usepackage{geometry}
\usepackage{amsmath,amssymb,amsfonts}
\usepackage{bm}
\usepackage{bbm}
\usepackage{xcolor}
\usepackage{hyperref}

\newtheorem{proposition}{Proposition}
\newtheorem{lemma}{Lemma}
 
\newtheorem{corollary}{Corollary}
\newtheorem{remark}{Remark} 
\newtheorem{conjecture}{Conjecture} 

\newcommand{\tr}{\mathrm{tr}}
\newcommand{\R}{\mathbb{R}}
\newcommand{\e}{\mathrm{e}}
\newcommand{\dd}{\mathrm{d}}

\begin{document}
\title{Difference equations of average entropies}
\author{Youyi Huang~\!\footnotemark[1],~~Linfeng Wei~\!\footnotemark[2],~~Lu Wei~\!\footnotemark[2],~~and Peter J.~Forrester~\!\footnotemark[3]
\footnotetext[1]{University of Central Missouri, Missouri 64093, USA}
\footnotetext[2]{Department of Computer Science, Texas Tech University, Texas 79409, USA}
\footnotetext[3]{School of Mathematics and Statistics, University of Melbourne, Victoria 3010, Australia}}

\abstract{Exact cumulants of entanglement entropies of random state ensembles have traditionally been studied within the random matrix framework. In this work, we propose an alternative approach based on the intrinsic connection to integrable systems. The central idea is to embed entropic quantities into tau functions satisfying Toda-type lattice equations, which in turn yield linear difference equations for their averages. Directly solving the difference equations recovers exact entropy formulas in the literature. The integrable systems approach bypasses the case-by-case, ensemble-dependent derivations required by random matrix methods. The approach also suggests a possible route towards unified and more efficient higher-order cumulant calculations by exploring integrable hierarchies.}

\keywords{difference equations; entanglement entropies; integrable systems; random states; tau functions; Toda lattices}

\maketitle


\section{Introduction}
Mathematical descriptions of entanglement are naturally formulated in terms of models of generic states, which are ensembles of random states constructed from different information-geometric metrics~\cite{BZ17}. Exact entropic statistics for quantifying entanglement, including von Neumann entropy and quantum purity, over various ensembles have been studied extensively since Page's conjectured formula~\cite{Page93} of average von Neumann entropy in 1993. These include the baseline model of Hilbert--Schmidt ensemble~\cite{Zyczkowski2001,Page93,Foong94,Ruiz95,Se96,Giraud07,VPO16,Wei17,Wei20,HWC21,HW26,Wei26RE,FN2026} and its refined version of Bures--Hall ensemble~\cite{Hall1998,Sommers2003,OSZ10,FK16,Sarkar2019,Wei20BHA,Wei20BH,LW21,WHW25,WHW26,Wei26RE}, as well as the recent fermionic Gaussian ensemble ~\cite{BHK21,BHKRV22,HW22,HW23,HW23C} and Bogoliubov--Kubo--Mori ensemble~\cite{Mi25,SW26}. Page's conjecture~\cite{Page93} and its subsequent proofs~\cite{Foong94,Ruiz95,Se96} appear to have firmly embedded the computation of entropic cumulants within the random matrix framework, which, as the above cited references reveal, has in fact dominated the subsequent development.

Despite the progress, the random matrix framework faces fundamental limitations. The framework necessitates case-by-case cumulant computations: the Hilbert--Schmidt ensemble utilizes integrals involving Laguerre polynomials~\cite{HW26}, the Bures--Hall ensemble relies on the connection to a two-matrix model~\cite{FK16}, and the Bogoliubov--Kubo--Mori ensemble exploits factorization properties of its moment matrix~\cite{SW26}. Moreover, each successive cumulant requires substantially greater effort, despite the observed approximately linear growth in the complexity of higher-order cumulant formulas~\cite{HW26,WHW25}. The present work considers a methodological transition to an integrable systems framework, in which the case-by-case calculations can be bypassed. An integrable systems framework incorporates entropic statistics into appropriate tau functions that satisfy Toda-type lattice equations. Such structural relations are otherwise not accessible within the random matrix framework.

As a first step to demonstrate the advantage of the integrable systems framework, we rederive average entropy formulas over different random state ensembles. Specifically, differentiation in the parameters of the constructed tau functions transforms the nonlinear lattice equations into linear difference equations in the matrix dimension $m$ satisfied by the average entropies. These difference equations are solved by elementary methods to recover mean entropy formulas. The rederivations are therefore more than alternative proofs but highlight the potential of integrable systems methods for unified and more efficient higher-order cumulant calculations than random matrix methods.

We begin by introducing the settings from which random state ensembles arise. For an $m\times m$ reduced density matrix $\rho$ of a bipartite system having the spectrum \(\lambda_1,\ldots,\lambda_m\) with the constraint
\begin{equation}\label{eq:rho}
\tr(\rho)=1,
\end{equation}
the degree of entanglement can be measured by different entropies~\cite{BZ17,Majumdar}. These include the von Neumann entropy
\begin{equation}\label{eq:von-neumann-entropy}
S=-\sum_{i=1}^{m}\lambda_i\ln\lambda_i
\end{equation}
supported in $0\leq S\leq\ln m$ and the quantum purity
\begin{equation}\label{eq:purity}
P=\sum_{i=1}^{m}\lambda_i^2
\end{equation} 
supported in $1/m\leq P\leq1$. The Hilbert--Schmidt ensemble~\cite{Page93,Majumdar} serves as a baseline model. The corresponding joint density of eigenvalues $\boldsymbol{\lambda}=\{\lambda_1,\ldots,\lambda_m\}$ of the reduced density matrix \(\rho\) is
\begin{equation}\label{eq:HS}
f(\boldsymbol{\lambda})\propto\delta\left(1-\sum_{i=1}^{m}\lambda_i\right)
\prod_{1\leq i<j\leq m}(\lambda_i-\lambda_j)^2\prod_{i=1}^{m}\lambda_i^{\alpha} \mathbbm{1}_{\lambda_i > 0},
\end{equation}
where the difference between the environment (ancilla) dimension $n$ and the system dimension $m$ is denoted by
\begin{equation}\label{eq:alpha}
\alpha=n-m.
\end{equation}
Note that all the densities involving $\alpha$ throughout this work are defined for arbitrary non-negative real $\alpha$. The parameter $\alpha$ is replaced by the dimension difference~(\ref{eq:alpha}) only when presenting the final average entropy formulas. With the fixed trace condition replaced by $\e^{-\sum_{i=1}^m x_i}$, this gives the eigenvalue density of complex Wishart matrices~\cite{Forrester}
\begin{equation}\label{eq:LUE}
\rho(\mathbf{x})=\frac{1}{Z}\prod_{1\leq i<j\leq m}(x_i-x_j)^2\prod_{i=1}^{m}x_i^{\alpha}\e^{-x_i}\mathbbm{1}_{x_i > 0},
\end{equation}
where 
\begin{equation}\label{eq:LUEZ}
Z=\prod_{i=1}^m\Gamma(i+1)\Gamma(\alpha+i).
\end{equation}
The density~(\ref{eq:LUE}) is also known as the unconstrained Hilbert--Schmidt ensemble, or the Laguerre unitary ensemble~\cite{Forrester}. Hereafter, we adopt the convention of referring to unconstrained (classical) ensembles as constrained (quantum) ensembles, and vice versa, when the meaning is clear from the context.

The Bures--Hall ensemble is an improved version of the Hilbert--Schmidt ensemble obtained by replacing its random state by a certain superposition~\cite{Hall1998,Zyczkowski2001,Sommers2003,Sarkar2019}. The joint eigenvalue density of the Bures--Hall ensemble is given by
\begin{equation}\label{eq:BH}
f(\boldsymbol{\lambda})
\propto
\delta\left(1-\sum_{i=1}^{m}\lambda_i\right)
\prod_{1\leq i<j\leq m}
\frac{(\lambda_i-\lambda_j)^2}
{\lambda_i+\lambda_j}
\prod_{i=1}^{m}\lambda_i^{\alpha-\frac{1}{2}} 
\mathbbm{1}_{\lambda_i > 0}.
\end{equation}
The corresponding unconstrained version is~\cite{Sarkar2019,Wei20BHA,Wei20BH,LW21,WHW25}
\begin{equation}\label{eq:UBH}
\rho({\mathbf {x}})=\frac{1}{Z}\prod_{1\le i<j\le m}
\frac{(x_i-x_j)^2}{x_i+x_j}\prod_{i=1}^m
x_i^{\alpha-\frac{1}{2}}\e^{-x_i}\mathbbm{1}_{x_i > 0}
\end{equation}
with 
\begin{equation}\label{eq:cBH}
Z=2^{-m(2\alpha-1+m)}\pi^{m/2}\prod_{i=1}^{m}\frac{\Gamma(i+1)\Gamma(2\alpha+i)}{\Gamma(\alpha+i)}.
\end{equation}
Since there are exact relations, known as moment conversions~\cite{HW26,FN2026}, of entropies~(\ref{eq:von-neumann-entropy}),~(\ref{eq:purity}) between constrained ensembles $f(\boldsymbol{\lambda})$ and unconstrained ensembles $\rho(\mathbf{x})$, we will focus on the latter ones in what follows.

The essential idea of the integrable systems approach, for which we take guidance from  the study of (multi-) moments of the Gaussian unitary ensemble ~\cite{MS09}, is to take for the unconstrained ensembles $\rho(\mathbf{x})$, the Toda tau functions of the form
\begin{equation}\label{eq:tau-introduction}
\tau_m({\bf t},q; s)=
\frac{Z}{m!}
\int_{\mathbb{R}_+^m}\rho(\mathbf{x})
\prod_{i=1}^m
\e^{q x_i^s+\sum_{r\geq 1}t_rx_i^r}\dd x_i.
\end{equation}
Here, in addition to the Toda times ${\bf t}=(t_1,t_2,\dots)$, we consider the deformation
\begin{equation}
\prod_{i=1}^m\e^{q x_i^s}
\end{equation}
of a real-valued power parameter $s$. This one-body deformation preserves integrable structure of tau functions~(\ref{eq:tau-introduction}) so that they satisfy different types of Toda lattice hierarchies for different densities $\rho(\mathbf{x})$. The $q$-flow is responsible for generating the desired entropic quantities through difference equations of averages of spectral moments 
\begin{equation}\label{eq:Rs}
R_s=\sum_{i=1}^{m} x_i^s,
\end{equation}
of a real-valued $s$ via differentiation of relevant Toda lattice systems with respect to $q$. Differentiation of~(\ref{eq:Rs}) with respect to $s$ before setting $s=1$ defines the induced von Neumann entropy~\cite{Wei17},
\begin{equation}\label{eq:T}
T=\sum_{i=1}^{m}x_{i}\ln x_{i},
\end{equation}
whereas the specialization $s=2$ in~(\ref{eq:Rs}) defines the induced purity. Once the average induced entropies are computed, moment conversions~\cite{HW26,FN2026} yield the desired average entropy formulas.

Besides the major models of Hilbert--Schmidt and Bures--Hall ensembles, we also adapt the integrable systems approach to the Muttalib--Borodin ensemble~\cite{FL15,FW17,Ch18},
\begin{equation}\label{eq:LMB}
\rho({\bf x})=\frac{1}{Z}\prod_{1\leq i<j\leq m}
(x_i-x_j)\left(x_i^\theta-x_j^\theta\right)
\prod_{i=1}^{m}x_i^\alpha\e^{-x_i}\mathbbm{1}_{x_i > 0}
\end{equation}
with $\theta>0$ and
\begin{equation}\label{eq:zMB}
Z=\theta^{m(m-1)/2}\prod_{i=1}^{m}\Gamma(i+1)\,\Gamma\!\left(\alpha+1+\theta(i-1)\right).
\end{equation}
For \(\theta=1\), the ensemble~(\ref{eq:LMB}) reduces to the Hilbert--Schmidt ensemble~(\ref{eq:LUE}). In the limit \(\theta\to0\), it recovers the Bogoliubov--Kubo--Mori ensemble~\cite{Mi25,SW26}
\begin{equation}\label{eq:uBKM}
\rho({\mathbf {x}})=\frac{1}{\Gamma^m(\alpha+1)\prod_{i=1}^{m}\Gamma(i+1)}\prod_{1\le i<j\le m}
(x_i-x_j)\!\left(\ln x_i-\ln x_j\right)\prod_{i=1}^m
x_i^\alpha\e^{-x_i}\mathbbm{1}_{x_i > 0}.
\end{equation}
Therefore, the Muttalib--Borodin ensemble interpolates between the Hilbert--Schmidt ensemble of $\theta=1$ and the Bogoliubov--Kubo--Mori ensemble of $\theta=0$. 

The constrained Bogoliubov--Kubo--Mori ensemble, obtained by restoring $\delta\left(1-\sum_{i=1}^{m}\lambda_i\right)$ while removing $\prod_{i=1}^m\e^{-x_i}$, cf.~(\ref{eq:HS}) and~(\ref{eq:BH}), was proposed in~\cite{Mi25} as a new generic state ensemble. An exact formula of average entropy over such an ensemble was reported recently in \cite{SW26}. By incorporating an additional deformation 
\begin{equation}
\prod_{i=1}^m\e^{u x_i^\theta}
\end{equation}
into the Toda lattice equation~(\ref{eq:tau-introduction}), one can similarly arrive at a difference equation of average entropy of the ensemble as shown in Section~\ref{sec:results}, from which the exact entropy formula in~\cite{SW26} is also reclaimed. 

Our main results in Section~\ref{sec:results} of difference equations of spectral moments~(\ref{eq:Rs}) are in matrix dimensions, while the existing results along these lines~\cite{HT03,HW26,WHW26} mainly concern difference equations in power parameters. The connection between the two types of difference equations is established in Section~\ref{sec:smr} by deriving mixed difference equations in both matrix dimensions and power parameters. Finally, we discuss perspectives of integrable systems approach in efficiently finding higher-order cumulants in Section~\ref{sec:con}. A new cumulant identity employed throughout this work in simplifying joint cumulants is summarized in Appendix~\ref{app1}. 

\section{New difference equations of spectral moments}\label{sec:results}
In this section, we present the main results of this work on difference equations in matrix dimensions of various random matrix models. Specifically, difference equations of spectral moments and corresponding entropies are derived for the Hilbert--Schmidt ensemble~(\ref{eq:LUE}), the Bures--Hall ensemble~(\ref{eq:UBH}), and the Muttalib--Borodin ensemble~(\ref{eq:LMB}) in Section~\ref{sec:HS}, Section~\ref{sec:BH}, and Section~\ref{sec:MB}, respectively. The new results on difference equations of spectral moments, summarized in the following three propositions, may be of independent interest to other applications.

We first set up some notations and discuss the general procedure. As will be seen, to compute average entropies, it is sufficient to consider a single time
\begin{equation}\label{eq:tauts}
\mathbf{t}=(t,0,0,\ldots)
\end{equation}  
in the tau function~(\ref{eq:tau-introduction}) that becomes
\begin{equation}\label{eq:tau-integral}
\tau_m(t,q;s)=\frac{Z}{m!}\int_{\R_+^m}\rho(\mathbf{x})\prod_{i=1}^{m}\e^{t x_i+q x_i^s}\dd x_i.
\end{equation}

The integrable structure of the tau functions is a key mechanism leading to the desired difference equations. For Hilbert--Schmidt ensemble~(\ref{eq:LUE}) and Bures--Hall ensemble~(\ref{eq:UBH}), the tau function~(\ref{eq:tau-integral}) satisfies the Toda lattice equation~\cite{UenoTakasaki1984,AdlerVanMoerbeke1995} and the Pfaff Toda lattice equation~\cite{HuLi17,Chang18}, respectively. By taking derivatives of these nonlinear lattice equations and introducing
\begin{equation}\label{eq:spt}
M_m(s;t)=\left.\frac{\partial}{\partial q}\ln\tau_m(t,q;s)\right\rvert_{q=0},
\end{equation} 
one obtains difference equations of average spectral moment~(\ref{eq:Rs}) by setting \(t=0\) in~(\ref{eq:spt}) as
\begin{equation}\label{eq:spM} 
M_m(s)=M_m(s;0)=\mathbb{E}\!\left[R_s\right].
\end{equation}
We emphasize that the spectral moment parameter $s$ in~(\ref{eq:spM}) takes real values, and hence all results obtained in this work hold for real $s$. This is distinct from existing results in the literature~\cite{HT03, GGR20, LV11}, which are restricted to integer values of $s$.

Difference equations of average purity are now readily available by setting \(s=2\) in the $M_m(s)$ recurrences, while difference equations of average von Neumann entropy ${M}_m$ are obtained via 
\begin{equation}\label{eq:unconstrained-entropy}
{M}_m=\left.\frac{\partial}{\partial s}M_m(s)\right\rvert_{s=1}.
\end{equation}
The resulting difference equations can be straightforwardly solved. The solutions recover known formulas in the literature~\cite{Page93,Foong94,OSZ10,Sarkar2019,Wei20BHA,LW21,FN2026,SW26} of the average von Neumann entropy and the average purity. 

For the Muttalib--Borodin ensemble~(\ref{eq:LMB}), the derivation of difference equations follows a similar procedure. The only new ingredient is an additional $u$-flow of the tau function that accounts for the $\theta$-deformation of the ensemble
\begin{equation}\label{eq:MB-tau-integral}
\tau_m(t,q,u;s,\theta)=\frac{Z}{m!}\int_{\R_+^m}\rho(x)\prod_{i=1}^{m}\e^{tx_i+qx_i^s+u x_i^\theta}\dd x_i,
\end{equation}
which will be shown to satisfy the two-dimensional Toda molecule equation~\cite{AdlerVanMoerbeke1997,Hirota04}. 

\subsection{Hilbert--Schmidt ensemble}\label{sec:HS} 
\begin{proposition}\label{prop1}
The average spectral moment~(\ref{eq:spM}) of the Hilbert--Schmidt ensemble~(\ref{eq:LUE}) satisfies the difference equation
\begin{equation}\label{eq:LUEMdiff}
M_{m+1}(s)+M_{m-1}(s)-\left(\frac{s(s+1)}{m(m+\alpha)}+2\right)M_m(s)=0.
\end{equation}
\end{proposition}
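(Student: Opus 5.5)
The plan is to use the Toda lattice equation for the tau function~(\ref{eq:tau-integral}) and linearize it in $q$ at $q=0$. The $t$-dependence at $q=0$ is then removed by exact scaling of the Laguerre weight, which turns the linearized equation into a purely algebraic recurrence in $m$.

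\textbf{Step 1: Toda equation.} First I identify~(\ref{eq:tau-integral}) for the density~(\ref{eq:LUE}) as a Hankel determinant. Because of the prefactor $Z/m!$, Andr\'eief's identity gives
\[
\tau_m(t,q;s)=\det\left[\mu_{i+j}\right]_{i,j=0}^{m-1},\qquad \mu_k=\int_0^\infty x^{k+\alpha}\e^{-x+tx+qx^s}\,\dd x,
\]
with $\tau_0=1$. Since $\partial_t\mu_k=\mu_{k+1}$, the Desnanot--Jacobi (Sylvester) identity yields the standard Toda equation
\[
\partial_t^2\ln\tau_m=\frac{\tau_{m+1}\tau_{m-1}}{\tau_m^2},
\]
valid for all $q$, since $q$ only modifies the weight. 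I regard this as the main structural input. The only care needed is in checking the normalization, so that no extra $m$-dependent constants spoil the equation.

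\textbf{Step 2: linearization in $q$.} Next I apply $\partial_q$ at $q=0$ to the logarithm of both sides. Using the definition~(\ref{eq:spt}), this gives
\[
\partial_t^2 M_m(s;t)=\left.\frac{\tau_{m+1}\tau_{m-1}}{\tau_m^2}\right|_{q=0}\bigl(M_{m+1}(s;t)+M_{m-1}(s;t)-2M_m(s;t)\bigr).
\]

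\textbf{Step 3: scaling.} At $q=0$ and $t<1$, the substitution $x\to x/(1-t)$ gives $\tau_m(t,0)=\tau_m(0,0)(1-t)^{-m(m+\alpha)}$. Hence the ratio equals $\partial_t^2\ln\tau_m=m(m+\alpha)(1-t)^{-2}$. The same rescaling applied to the average gives
\[
M_m(s;t)=(1-t)^{-s}M_m(s),
\]
where the weight $\e^{-(1-t)x}$ is normalized by $\tau_m(t,0)$. Consequently $\partial_t^2 M_m(s;t)=s(s+1)(1-t)^{-s-2}M_m(s)$.

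\textbf{Step 4: the recurrence.} Finally I set $t=0$. This gives
\[
s(s+1)M_m(s)=m(m+\alpha)\bigl(M_{m+1}(s)+M_{m-1}(s)-2M_m(s)\bigr).
\]
Dividing by $m(m+\alpha)$ gives~(\ref{eq:LUEMdiff}). This holds for real $s$ with $s+\alpha>-1$, so that the moments converge.

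I do not expect a genuine obstacle. The step needing most care is Step 1, namely confirming that $Z/m!$ with $Z$ from~(\ref{eq:LUEZ}) makes $\tau_m$ exactly the Hankel determinant. Otherwise, constant factors would enter the ratio $\tau_{m+1}\tau_{m-1}/\tau_m^2$. These would in any case be absorbed, because the scaling argument computes the ratio independently as $\partial_t^2\ln\tau_m$.
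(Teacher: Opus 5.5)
Your proposal is correct and follows the paper's proof essentially step for step: Andr\'eief's formula gives the Hankel determinant, then you use the one-time Toda equation, linearize in $q$, and evaluate the rescaling identities $\tau_m(t,0;s)=(1-t)^{-m(m+\alpha)}\tau_m(0,0;s)$ and $M_m(s;t)=(1-t)^{-s}M_m(s)$ at $t=0$. The only minor difference is that you get the coefficient $\tau_{m+1}\tau_{m-1}/\tau_m^2$ at $q=0$ as $\partial_t^2\ln\tau_m=m(m+\alpha)(1-t)^{-2}$ from the Toda equation itself, rather than from the explicit normalization $\prod_{i=1}^m\Gamma(i)\Gamma(\alpha+i)$; as you note, this makes the normalization check harmless, and it is the same device the paper uses in the Bures--Hall case.
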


\begin{proof}
In the tau function~(\ref{eq:tau-integral}), we take $\rho({\bf x})$ to be the density~(\ref{eq:LUE}). By Andr\'eief's integration formula~\cite{Andreief1883, Fo18}, the tau function is expressed as a Hankel determinant
\begin{equation}\label{eq:Hd}
\tau_m(t,q;s)=\det\!\left(\mu_{i+j}(t,q;s)\right)_{i,j=0}^{m-1},
\end{equation}
where
\begin{equation}
\mu_k(t,q;s)=\int_0^\infty x^{k+\alpha}\e^{-(1-t)x+qx^s}\dd x.
\end{equation}
It is well known that Hankel determinant of the form~(\ref{eq:Hd}) satisfies Toda lattice equation~\cite{UenoTakasaki1984,AdlerVanMoerbeke1995} of a single time \(t\) as\footnote{In what follows, arguments of tau functions are dropped whenever clear from the context.}
\begin{equation}\label{eq:toda}
\frac{\partial^2}{\partial t^2}\ln\tau_m=\frac{\tau_{m+1}\tau_{m-1}}{\tau_m^2}.
\end{equation}
Taking the derivative of \eqref{eq:toda} with respect to \(q\) and keeping in mind the definition~(\ref{eq:spt}), one has
\begin{equation}\label{eq:linearized-toda}
\frac{\partial^2}{\partial t^2}M_m(s;t)=R_m(t)\left(M_{m+1}(s;t)+M_{m-1}(s;t)-2M_m(s;t)\right),
\end{equation}
where the term
\begin{equation}\label{eq:Rdef}
R_m(t)=\frac{\tau_{m+1}(t,0;s)\tau_{m-1}(t,0;s)}{\tau_m^2(t,0;s)}
\end{equation}
is independent of \(s\) by the definition~(\ref{eq:Hd}). The change of variables 
\begin{equation} 
y_i=(1-t)x_i 
\end{equation} 
in the Hankel determinant~(\ref{eq:Hd}) gives 
\begin{equation}\label{eq:tau-scaling} 
\tau_m(t,q;s) = (1-t)^{-m(m+\alpha)} \tau_m\!\left(0,q(1-t)^{-s};s\right). 
\end{equation}
Setting $q=0$ in (\ref{eq:tau-scaling}) leads to a factorization of Toda time $t$ as
\begin{equation}\label{eq:taup0}
\tau_m(t,0;s)=(1-t)^{-m(m+\alpha)}\tau_m(0,0;s),
\end{equation}
where 
\begin{equation}\label{eq:tau00}
\tau_m(0,0;s)=\prod_{i=1}^m\Gamma(i)\Gamma(\alpha+i)
\end{equation}
is read off from the normalization constant~(\ref{eq:LUEZ}). Consequently, $R_m(t)$ in~(\ref{eq:Rdef}) is simplified to
\begin{equation}\label{eq:R}
R_m(t)=\frac{m(m+\alpha)}{(1-t)^2}.
\end{equation}
Inserting (\ref{eq:tau-scaling}) into~(\ref{eq:spt}) and using the definition~(\ref{eq:spM}), one has
\begin{equation}\label{eq:M-scaling}
M_m(s;t)=(1-t)^{-s}M_m(s)
\end{equation}
and therefore
\begin{equation}\label{eq:mt2}
\left.\frac{\partial^2}{\partial t^2}M_m(s;t)\right\rvert_{t=0}=s(s+1)M_m(s).
\end{equation}
Note that the complete factorization of Toda time $t$ in (\ref{eq:M-scaling}) is a consequence of the homogeneity of densities of the type~\eqref{eq:LUE} under trace rescaling that allows the differentiation with respect to $t$ being trivially performed. The factorizations also extend to Toda deformations with multiple sources (\ref{eq:tau-discussion}) as shown in Lemma~\ref{lemma1} in Appendix~\ref{app1}. Finally, substituting~\eqref{eq:R} and~\eqref{eq:mt2} into~\eqref{eq:linearized-toda} at $t=0$ leads the claimed difference equation~\eqref{eq:LUEMdiff}, which completes the proof of Proposition~\ref{prop1}.
\end{proof}

\begin{remark}
The difference equation~(\ref{eq:LUEMdiff}) was previously obtained in~\cite{CundenMezzadriOConnellSimm} by first identifying a terminating hypergeometric function \({}_3F_2\) representing the spectral moments as Hahn polynomials in $m$, and then applying a known second-order difference equation satisfied by Hahn polynomials.
\end{remark}

Solving special cases of the linear difference equation~(\ref{eq:LUEMdiff}) in Proposition~\ref{prop1} leads to average entropy formulas as summarized in the following corollary.
\begin{corollary}\label{coro1}
Averages of von Neumann entropy~(\ref{eq:von-neumann-entropy}) and quantum purity~(\ref{eq:purity}) of Hilbert--Schmidt ensemble~(\ref{eq:LUE}) satisfy the difference equation
\begin{equation}\label{eq:LUEKdiff}
{M}_{m+1}+{M}_{m-1}-\left(\frac{2}{m(m+\alpha)}+2\right){M}_m=3
\end{equation}
with the solution recovering the formula
\begin{equation}\label{eq:HSentropy} 
\mathbb{E}[S] = \psi_0\!\left(m n+1\right) - \psi_0(n+1) - \frac{m-1}{2 n}
\end{equation} 
and the difference equation
\begin{equation}
M_{m+1}(2)+M_{m-1}(2)-\left(\frac{6}{m(m+\alpha)}+2\right)M_m(2)=0
\label{eq:LUEPdiff}
\end{equation}
with the solution recovering the formula
\begin{equation}\label{eq:HS-average-purity}
\mathbb{E}[P]=\frac{m+n}{mn+1},
\end{equation}
respectively.
\end{corollary}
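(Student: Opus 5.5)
The plan is to specialize the linear difference equation~(\ref{eq:LUEMdiff}) of Proposition~\ref{prop1} in two ways, solve the resulting inhomogeneous or homogeneous second-order recurrences in $m$ (at fixed $\alpha$) by an ansatz with the two initial values $M_0=0$ and $M_1$, and then pass from unconstrained to constrained averages by the moment conversions~\cite{HW26,FN2026}.

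\emph{Derivation of the two recurrences.} For the entropy, differentiate~(\ref{eq:LUEMdiff}) with respect to $s$ and set $s=1$, using the definition~(\ref{eq:unconstrained-entropy}). Since $\partial_s[s(s+1)]=2s+1$, this produces
\begin{equation*}
M_{m+1}+M_{m-1}-\left(\frac{2}{m(m+\alpha)}+2\right)M_m=\frac{3M_m(1)}{m(m+\alpha)}.
\end{equation*}
It remains to observe that $M_m(1)=\mathbb{E}[\tr]=m(m+\alpha)$. This follows from~(\ref{eq:tau-scaling}) by comparing $\partial_t\ln\tau_m$ with $\partial_q\ln\tau_m$ at $q=0$, or simply from the Laguerre weight. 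The right-hand side is therefore $3$, which gives~(\ref{eq:LUEKdiff}). For the purity, setting $s=2$ in~(\ref{eq:LUEMdiff}) gives~(\ref{eq:LUEPdiff}) directly.

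\emph{Purity.} I would write everything in terms of $n=m+\alpha$, so that $m(m+\alpha)=mn$ and the shift $m\to m\pm1$ shifts $n\to n\pm1$. I then verify that $M_m(2)=mn(m+n)=m(m+\alpha)(2m+\alpha)$ satisfies~(\ref{eq:LUEPdiff}); this is a polynomial identity. The initial data match: $M_0(2)=0$, and $M_1(2)=(\alpha+1)(\alpha+2)$ is the second moment of a Gamma$(\alpha+1)$ variable. A second-order recurrence with two matching initial values determines the solution uniquely, so this is the solution. The moment conversion for the second moment, $\mathbb{E}[P]=\mathbb{E}[R_2]/(mn(mn+1))$, then yields~(\ref{eq:HS-average-purity}).

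\emph{Entropy.} I note that $m(m+\alpha)$ solves the homogeneous part of~(\ref{eq:LUEKdiff}), since it is $M_m(1)$ and~(\ref{eq:LUEMdiff}) holds at $s=1$. I would therefore try the ansatz
\begin{equation*}
M_m=mn\,\psi_0(n+1)+\frac{m(m-1)}{2},\qquad n=m+\alpha.
\end{equation*}
To check it, substitute and use $\psi_0(n+2)=\psi_0(n+1)+\frac{1}{n+1}$ and $\psi_0(n)=\psi_0(n+1)-\frac{1}{n}$. The coefficients of $\psi_0(n+1)$ cancel by the homogeneous identity, and the rational remainder should collapse to $3$. The initial data again match: $M_0=0$, and $M_1=(\alpha+1)\psi_0(\alpha+2)=\mathbb{E}[x\ln x]$ for a Gamma$(\alpha+1)$ variable. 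Uniqueness then identifies the solution. The conversion $\mathbb{E}[S]=\psi_0(mn+1)-\mathbb{E}[T]/(mn)$ gives~(\ref{eq:HSentropy}).

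The main obstacle is finding the particular solution of~(\ref{eq:LUEKdiff}). If the ansatz above fails, I would instead use reduction of order with the known homogeneous solution $m(m+\alpha)$. This turns~(\ref{eq:LUEKdiff}) into a first-order recurrence for the ratio differences, which telescopes into harmonic-type sums that assemble into $\psi_0(n+1)$.
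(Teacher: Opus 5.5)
Your proposal is correct and follows the paper's proof essentially step by step. It uses the same specializations of (\ref{eq:LUEMdiff}) with $M_m(1)=m(m+\alpha)$, the same initial values $M_0=0$ and $M_1=(\alpha+1)\psi_0(\alpha+2)$, the same polynomial solution $M_m(2)=m(m+\alpha)(2m+\alpha)$, and the same moment conversions. The only difference is in the entropy step: you verify the closed form $M_m=m(m+\alpha)\psi_0(m+\alpha+1)+m(m-1)/2$ by substitution (the rational remainder does collapse to $3$) and invoke uniqueness from two initial values, whereas the paper derives it constructively by the reduction-of-order/telescoping argument you list as your fallback.
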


\begin{proof}
Differentiating~\eqref{eq:LUEMdiff} with respect to \(s\) before setting \(s=1\), and using the definition~(\ref{eq:unconstrained-entropy}), gives
\begin{equation}
{M}_{m+1}+{M}_{m-1}-\left(\frac{2}{m(m+\alpha)}+2\right){M}_m-\frac{3}{m(m+\alpha)}M_m(1)=0.
\end{equation}
The difference equation~\eqref{eq:LUEKdiff} is then established upon substituting 
\begin{equation} 
M_m(1)=\mathbb{E}\!\left[R_1\right]=m(m+\alpha), 
\end{equation} 
which is derived by using~(\ref{eq:spM}) and the fact that the trace $R_1$ of the ensemble~(\ref{eq:LUE}) follows a gamma density~\cite{HW26},
\begin{equation}\label{eq:trace-density} 
f\!\left(R_1\right) = \frac{1}{\Gamma\!\left(m(m+\alpha)\right)}R_1^{m(m+\alpha)-1}\e^{-R_1}. 
\end{equation}

The required initial conditions to solve (\ref{eq:LUEKdiff}) are obtained by using the definition~(\ref{eq:unconstrained-entropy}) as
\begin{eqnarray}
{M}_0 &=& 0 \\
M_1 &=& \frac{1}{\Gamma(\alpha+1)}\int_{0}^{\infty}x^{\alpha+1}\e^{-x}\ln x\,\dd x =(\alpha+1)\psi_0(\alpha+2), \label{eq:M1}
\end{eqnarray}
where the digamma function \(\psi_0\) is the polygamma function $\psi_{i}(x)$,
\begin{equation}\label{eq:polygamma}
\psi_{i}(x)=\frac{\dd^{i+1}}{\dd x^{i+1}}\ln\Gamma(x)=\frac{\dd^{i}}{\dd x^{i}}\psi_{0}(x)
\end{equation}
of order $i=0$. In solving~\eqref{eq:LUEKdiff}, we notice that it is a self-adjoint second-order difference equation, which can be reduced to a first-order one by factoring out the homogeneous solution~\cite{KP01}. Specifically, $M_m(1)$ satisfies the homogeneous version of~\eqref{eq:LUEKdiff},
\begin{equation}\label{eq:c1h}
{M}_{m+1}(1)+{M}_{m-1}(1)-\left(\frac{2}{m(m+\alpha)}+2\right){M}_m(1)=0,
\end{equation}
as seen by setting \(s=1\) in~(\ref{eq:LUEMdiff}).
Due to this, introducing in~\eqref{eq:LUEKdiff} $C_m$ in place of $M_m$ by the change of variable
\begin{equation}
M_m=M_m(1)C_m
\end{equation}
leads to
\begin{equation}\label{eq:LUEC}
M_{m+1}(1)(C_{m+1}-C_m)-M_{m-1}(1)(C_m-C_{m-1})=3.
\end{equation}
Multiplying~(\ref{eq:LUEC}) by \(M_m(1)\) transforms it to the telescoping form
\begin{equation}\label{eq:LUEY}
Y_{m+1}-Y_m=3M_m(1),
\end{equation}
where 
\begin{equation}
Y_m=M_{m-1}(1)M_m(1)(C_m-C_{m-1}).
\end{equation}
Direct summation of (\ref{eq:LUEY}) now yields
\begin{equation}
C_m-C_{m-1}=\frac{2m+3\alpha-1}{2(m+\alpha-1)(m+\alpha)}.
\label{eq:D-solution}
\end{equation}
Summing once more and using the property of digamma function~\cite{Brychkov08}, valid for $x>0$,
\begin{equation}\label{eq:psi0z}
\psi_0(x+1)=\psi_0(x)+\frac{1}{x}, 
\end{equation}
we obtain
\begin{eqnarray}
C_m=\psi_0(m+\alpha+1)+\frac{m-1}{2(m+\alpha)}, \label{eq:C-solution}
\end{eqnarray}
and therefore
\begin{equation}
{M}_m=m(m+\alpha)\psi_0(m+\alpha+1)+\frac{m(m-1)}{2}.
\label{eq:kappa-solution}
\end{equation}
Inserting~(\ref{eq:kappa-solution}) into the moment relation between the two ensembles~(\ref{eq:HS}) and~(\ref{eq:LUE}), see for example~\cite{HW26},
\begin{equation}
\mathbb{E}[S]=\psi_0\!\left(M_m(1)+1\right)-\frac{{M}_m}{M_m(1)},\label{eq:kappa-entropy-conversion}
\end{equation}
and together with the definition~(\ref{eq:alpha}) we recover the formula of average von Neumann entropy~(\ref{eq:HSentropy}) obtained in~\cite{Page93,Ruiz95,Se96}.

For quantum purity, the difference equation (\ref{eq:LUEPdiff}) follows directly by setting $s=2$ in (\ref{eq:LUEMdiff}). Unlike the inhomogeneous equation~\eqref{eq:LUEKdiff}, this homogeneous equation can be solved by seeking a polynomial solution~\cite{KP01}, 
\begin{equation} 
M_m(2)=m(m+\alpha)(2m+\alpha). 
\end{equation}
Applying the moment conversion~\cite{LW21}, 
\begin{equation}\label{eq:HS-purity-conversion}
\mathbb{E}[P]=\frac{M_m(2)}{M_m(1)\bigl(M_m(1)+1\bigr)},
\end{equation}
recovers the formula of average purity~(\ref{eq:HS-average-purity}) obtained in~\cite{LW21,FN2026}. This completes the proof of Corollary~\ref{coro1}.
\end{proof}

\subsection{Bures--Hall ensemble}\label{sec:BH}
\begin{proposition}\label{prop2}
The average spectral moment~(\ref{eq:spM}) of the Bures--Hall ensemble~(\ref{eq:UBH}) satisfies the difference equation
\begin{equation}\label{eq:M-recurrenceBH}
\left(\!\frac{s}{2(m+\alpha)}+1\right)\!M_{m+1}(s)+
\left(1\!-\!\frac{s}{2(m+\alpha)}\right)\!M_{m-1}(s)-
\left(\!\frac{2s(s+1)}{m(m+2\alpha)}+2\right)\!M_m(s)=0.
\end{equation}
\end{proposition}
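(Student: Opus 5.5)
The plan is to follow the template of Proposition~\ref{prop1}, replacing the Toda lattice \eqref{eq:toda} with the Pfaff lattice satisfied by the Bures--Hall tau function. We take $\rho(\mathbf{x})$ in \eqref{eq:tau-integral} to be \eqref{eq:UBH}. The Cauchy-type factor $\prod_{i<j}(x_i-x_j)/(x_i+x_j)$ is a Schur Pfaffian. By de Bruijn's integration formula, $\tau_m(t,q;s)$ is therefore a Pfaffian of a skew moment matrix. Its entries are built from
\[
\int\!\!\int \frac{x-y}{x+y}\,x^{i+\alpha-\frac12}y^{j+\alpha-\frac12}\e^{-(1-t)(x+y)+q(x^s+y^s)}\dd x\,\dd y,
\]
bordered by a row of one-dimensional moments when $m$ is odd. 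The $t$-dependence enters only through $\e^{t(x+y)}$, so $\partial_t$ acts as the shift $i\mapsto i+1$ on the moment indices. This is the setting in which the Pfaff (BKP-type) Toda lattice of~\cite{HuLi17,Chang18} holds. I would quote that lattice in the form
\[
\frac{\partial^2}{\partial t^2}\ln\tau_m=c\,\frac{\tau_{m+1}\tau_{m-1}}{\tau_m^2}+\text{(first-derivative terms in }\partial_t\ln\tau_{m\pm1}-\partial_t\ln\tau_m),
\]
with the normalization of $\tau_m$ fixed by \eqref{eq:cBH}.

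The second step is the scaling argument, which is again exact. The density \eqref{eq:UBH} has total homogeneity degree $\binom{m}{2}+m(\alpha-\tfrac12)+m=\tfrac12 m(m+2\alpha)$. Substituting $y_i=(1-t)x_i$ therefore gives
\[
\tau_m(t,q;s)=(1-t)^{-m(m+2\alpha)/2}\,\tau_m\!\left(0,q(1-t)^{-s};s\right).
\]
From this we read off four consequences. First, $M_m(s;t)=(1-t)^{-s}M_m(s)$. Second, $\partial_t M_m(s;t)|_{t=0}=sM_m(s)$ and $\partial_t^2M_m(s;t)|_{t=0}=s(s+1)M_m(s)$. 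Third, $\partial_t\ln\tau_m(t,0)=\tfrac{m(m+2\alpha)}{2(1-t)}$. Fourth, the ratio $\tau_{m+1}\tau_{m-1}/\tau_m^2$ at $q=0$ is an explicit multiple of $(1-t)^{-2}$, with coefficient computed from \eqref{eq:cBH}.

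Third, we differentiate the Pfaff lattice with respect to $q$ at $q=0$. Unlike \eqref{eq:linearized-toda}, the first-derivative terms of the lattice now contribute $\partial_t M_{m\pm1}(s;t)$, which equals $sM_{m\pm1}(s)$ at $t=0$. They also contribute the explicit $q=0$ quantities $\partial_t\ln\tau_{m\pm1}(t,0)$. These terms are exactly what should generate the asymmetric coefficients $1\pm s/(2(m+\alpha))$ in \eqref{eq:M-recurrenceBH}. The factor $1/(m+\alpha)$ should arise from differences such as $\tfrac12\bigl((m+1)(m+1+2\alpha)-(m-1)(m-1+2\alpha)\bigr)=2(m+\alpha)$. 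Setting $t=0$ and dividing by the ratio coefficient, which should reduce to $\tfrac12 m(m+2\alpha)$ up to normalization, then produces the term $\tfrac{2s(s+1)}{m(m+2\alpha)}M_m(s)$.

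The main obstacle is the first step. We must pin down the precise form and normalization of the Pfaff lattice for this particular Pfaffian, including the parity issue between even and odd $m$ from the bordering row. We must also check that the equation is valid uniformly in $m$ with the one-time flow $t$ only. I would first verify the bilinear identity directly on small cases, $m=1,2,3$, using the explicit Pfaffian. As a consistency check on the final recurrence, I would confirm that $M_m(1)=\tfrac12 m(m+2\alpha)$, the mean of the gamma-distributed trace, satisfies \eqref{eq:M-recurrenceBH} at $s=1$.
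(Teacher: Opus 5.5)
\textbf{There is a genuine gap: the lattice equation.} Your skeleton matches the paper's: de Bruijn/Schur Pfaffian representation, a single-time Pfaff/B-Toda lattice, and the rescaling $y_i=(1-t)x_i$. That rescaling gives $M_m(s;t)=(1-t)^{-s}M_m(s)$ and $\partial_t\ln\tau_m(t,0)=\frac{m(m+2\alpha)}{2(1-t)}$, and you then linearize in $q$ at $t=0$. The missing piece is the one step that actually produces the coefficients. You leave the lattice as the ``main obstacle'', and the form you do commit to cannot hold. A Toda-like standalone term $c\,\tau_{m+1}\tau_{m-1}/\tau_m^2$ plus additive first-derivative terms contradicts your own scaling. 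The homogeneity degree is $\tfrac12 m(m+2\alpha)$, not $m(m+\alpha)$. So the exponent of $(1-t)$ in $\tau_{m+1}\tau_{m-1}/\tau_m^2$ at $q=0$ is $-\tfrac12\bigl[(m+1)(m+1+2\alpha)+(m-1)(m-1+2\alpha)-2m(m+2\alpha)\bigr]=-1$. It is not $-2$, as your fourth consequence asserts; that is the Hilbert--Schmidt value. Hence at $q=0$ the left side $\partial_t^2\ln\tau_m$ is $\propto(1-t)^{-2}$, while every term on your right side is $\propto(1-t)^{-1}$.

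What you need is the 1+1 B-Toda equation for the signed function $\tilde\tau_m=(-1)^{\lfloor m/2\rfloor}\tau_m$:
\[
\tilde\tau_m\,\partial_t^2\tilde\tau_m-(\partial_t\tilde\tau_m)^2=\tilde\tau_{m+1}\,\partial_t\tilde\tau_{m-1}-\tilde\tau_{m-1}\,\partial_t\tilde\tau_{m+1}.
\]
In terms of $\tau_m$ this reads $\partial_t^2\ln\tau_m=-\frac{\tau_{m+1}\tau_{m-1}}{\tau_m^2}\bigl(\partial_t\ln\tau_{m-1}-\partial_t\ln\tau_{m+1}\bigr)$. The ratio \emph{multiplies} the first-derivative difference, and the parity sign supplies the overall minus. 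Your remark that the $q$-derivative also brings in the explicit $\partial_t\ln\tau_{m\pm1}(t,0)$ only makes sense for this product structure.

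With the correct equation, the bookkeeping also differs from your heuristic:
\begin{itemize}
\item The ratio at $t=q=0$ is $\frac{m(m+2\alpha)}{4(m+\alpha)}$, not $\tfrac12 m(m+2\alpha)$. Your proposed route through \eqref{eq:cBH} does reproduce this value; the paper instead reads it off by inserting the $q=0$ scaling into the lattice itself.
\item The coefficient $\tfrac12 m(m+2\alpha)$ of $M_{m+1}+M_{m-1}-2M_m$ is this ratio times $\partial_t\ln\tau_{m+1}-\partial_t\ln\tau_{m-1}=2(m+\alpha)$.
\item The asymmetric part is $\frac{m(m+2\alpha)}{4(m+\alpha)}\,s\,\bigl(M_{m+1}(s)-M_{m-1}(s)\bigr)$.
\end{itemize}
Dividing by $\tfrac12 m(m+2\alpha)$ then yields the claimed recurrence. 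Without the correct bilinear equation, the coefficients $1\pm\frac{s}{2(m+\alpha)}$ are only anticipated, not derived.
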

\begin{proof}
In the tau function~(\ref{eq:tau-integral}), we take $\rho({\bf x})$ to be the density~(\ref{eq:UBH}) so that
\begin{equation}\label{eq:bhtau-definition}
\tau_m(t,q;s)=
\frac{1}{m!}\int_{\R_+^m}
\prod_{1\le i<j\le m}
\frac{(x_i-x_j)^2}{x_i+x_j}
\prod_{i=1}^{m}
x_i^{\alpha-\frac{1}{2}}
\e^{-x_i+t x_i+q x_i^s}\,\dd x_i.
\end{equation}
We consider the signed tau function 
\begin{equation}\label{eq:tautilde}
\Tilde{\tau}_m=(-1)^{\lfloor\frac{m}{2}\rfloor}\tau_m,
\end{equation}
where $\lfloor a \rfloor$ denotes the floor of $a$. By applying the de Bruijn's formulae~\cite{deBruijn1955} and Schur's Pfaffian identities~\cite{Schur1911}, the signed tau function~(\ref{eq:tautilde}) is expressed in terms of Pfaffians~\cite{Chang18} as 
\begin{eqnarray}
\Tilde{\tau}_{2m} &=& \operatorname{Pf}(0,1,\cdots,2m-1)\label{eq:Pfa} \\
\Tilde{\tau}_{2m+1} &=& \operatorname{Pf}\left(d_0,0,1,\cdots,2m\right)\label{eq:Pfa2}
\end{eqnarray}
with Pfaffian entries satisfying
\begin{eqnarray}
\operatorname{Pf}(i,j) &=& \mu_{i, j} \\
\operatorname{Pf}\left(d_0, i\right) &=& \beta_i \\
\frac{\partial}{\partial t} \operatorname{Pf}\left(d_0, i\right) &=& \operatorname{Pf}\left(d_0, i+1\right)= \operatorname{Pf}\left(d_1, i\right) \\
\frac{\partial}{\partial t} \operatorname{Pf}(i, j) &=& \operatorname{Pf}(i+1, j)+\operatorname{Pf}(i, j+1)=\operatorname{Pf}\left(d_0, d_1, i, j\right) \\
\operatorname{Pf}\left(d_0,d_1\right) &=& 0,\label{eq:Pfd0d1}
\end{eqnarray}
where 
\begin{eqnarray}
\mu_{i, j} &=& \iint_{\mathbb{R}_{+}^2} x^{i+\alpha-\frac{1}{2}} y^{j+\alpha-\frac{1}{2}}\frac{x-y}{x+y}\e^{-(1-t)\left(x+y\right)+q\left(x^s+y^s\right)} \dd x \dd y \\
\beta_i &=& \int_{\mathbb{R}_{+}} x^{i+\alpha-\frac{1}{2}} \e^{-(1-t)x+qx^s} \dd x.\label{eq:Pfabeta}
\end{eqnarray}
It is known that the Pfaffian of the form~(\ref{eq:Pfa})--(\ref{eq:Pfabeta}) satisfies the 1+1 B-Toda lattice equation~\cite{Date1982,Jimbo1983,HuLi17,Chang18} of a single time $t$ as
\begin{equation}\label{eq:bilinear-B-Toda}
\Tilde{\tau}_m\,\frac{\partial^2}{\partial t^2}\Tilde{\tau}_m-
\left(\frac{\partial}{\partial t} \Tilde{\tau}_m\right)^2
=\Tilde{\tau}_{m+1}\frac{\partial}{\partial t}\Tilde{\tau}_{m-1}
-\Tilde{\tau}_{m-1}\frac{\partial}{\partial t}\Tilde{\tau}_{m+1}.
\end{equation}
If we now consider log-derivatives of~(\ref{eq:tautilde}), this in turn gives the lattice equation of log-derivatives of the tau function~(\ref{eq:bhtau-definition}) as
\begin{equation}\label{eq:log-B-Toda}
\frac{\partial^2}{\partial t^2}\ln\tau_m=U_m(t,q;s)
\left(\frac{\partial}{\partial t}\ln\tau_{m-1}-\frac{\partial}{\partial t}\ln\tau_{m+1}\right),
\end{equation}
where we denote
\begin{equation}\label{eq:U-definition}
U_m(t,q;s)=-\frac{\tau_{m+1}\tau_{m-1}}{\tau_m^2}
\end{equation}
and the minus sign on its right hand side is a consequence of the definition~(\ref{eq:tautilde}) as
\begin{equation}
\frac{(-1)^{\lfloor(m+1)/2\rfloor+\lfloor(m-1)/2\rfloor}}{(-1)^{2\lfloor m/2\rfloor}}=-1.
\end{equation}
The change of variables
\begin{equation}\label{eq:chagofvariaxy}
y_i=(1-t) x_i
\end{equation}
in (\ref{eq:bhtau-definition}) for the case $q=0$ leads to a complete factorization of $t$, cf.~(\ref{eq:M-scaling}), as
\begin{equation}\label{eq:Z-scaling}
\tau_m(t,0;s)=(1-t)^{-\frac{m(m+2\alpha)}{2}}\tau_m(0,0;s).
\end{equation}
We then have
\begin{equation}
\ln\tau_m(t,0;s)=\ln\tau_m(0,0;s)-\frac{m(m+2\alpha)}{2}\ln(1-t),
\end{equation}
whose partial derivatives with respect to $t$ give
\begin{eqnarray}
\frac{\partial}{\partial t}\ln\tau_m(t,0;s)&=&\frac{m(m+2\alpha)}{2(1-t)} \label{eq:F-derivative1} \\
\frac{\partial^2}{\partial t^2}\ln\tau_m(t,0;s)&=&\frac{m(m+2\alpha)}{2(1-t)^2}.\label{eq:F-derivative2}
\end{eqnarray}
By~(\ref{eq:F-derivative1}), we have 
\begin{eqnarray}
\frac{\partial}{\partial t}\ln \tau_{m-1}(t,0;s)-\frac{\partial}{\partial t}\ln \tau_{m+1}(t,0;s)
=-\frac{2(m+\alpha)}{1-t},
\label{eq:F-neighbor-difference}
\end{eqnarray} 
where substituting \eqref{eq:F-derivative2} and \eqref{eq:F-neighbor-difference} into \eqref{eq:log-B-Toda} at $q=0$ gives
\begin{equation}\label{eq:U-explicit}
U_m(t,0;s)=-\frac{m(m+2\alpha)}{4(m+\alpha)(1-t)}.
\end{equation} 
Differentiating \eqref{eq:U-definition} with respect to \(q\) before setting \(q=0\) and using the definition~(\ref{eq:spt}), one has
\begin{equation}\label{eq:q-derivative-U}
\left.\frac{\partial}{\partial q}U_m(t,q;s)\right\rvert_{q=0}=U_m(t,0;s)\left(M_{m+1}(s;t)+M_{m-1}(s;t)-2M_m(s;t)\right).
\end{equation}
Consequently, the lattice equation~(\ref{eq:log-B-Toda}) at $q=0$ becomes
\begin{eqnarray}\label{eq:linearized-B-Toda}
\!\!\!\!\!\!\frac{\partial^2}{\partial t^2}M_m(s;t)&=&U_m(t,0;s)\bigg(\!\left(M_{m+1}(s;t)+M_{m-1}(s;t)-2M_m(s;t)\right) \nonumber \\
&&\!\!\!\!\!\!\!\!\!\!\!\!\!\!\!\!\!\!\!\!\!\!\!\!\!\!\!\!\!\!\!\!\!\!\!\!\times\left(\frac{\partial}{\partial t}\ln \tau_{m-1}(t,0;s)-\frac{\partial}{\partial t}\ln\tau_{m+1}(t,0;s)\right)+\frac{\partial}{\partial t}M_{m-1}(s;t)-\frac{\partial}{\partial t}M_{m+1}(s;t)\bigg).
\end{eqnarray}
By the definition~\eqref{eq:spt}, the $k$-th derivative of~$M_m(s;t)$ with respect to $t$, evaluated at $t=0$, generates a joint cumulant of $R_s$ and $k$ copies of $R_1$. Hence, invoking~(\ref{eq:general-cumulant-scaling}) in Lemma~\ref{lemma1} for the cases $l=1, n=1$ and $l=1, n=2$, we have 
\begin{equation}\label{eq:bhM-derivative}
\left.\frac{\partial}{\partial t}M_m(s;t)\right\rvert_{t=0}=s\mathbb{E}\!\left[R_s\right]=sM_m(s) 
\end{equation}
and 
\begin{equation}\label{eq:M-derivatives}  
\left.\frac{\partial^2}{\partial t^2}M_m(s;t)\right\rvert_{t=0}=s(s+1)\mathbb{E}\!\left[R_s\right]=s(s+1)M_m(s),
\end{equation}
respectively. Finally, substituting \eqref{eq:F-neighbor-difference}, \eqref{eq:U-explicit}, \eqref{eq:bhM-derivative}, and \eqref{eq:M-derivatives} into \eqref{eq:linearized-B-Toda} yields the difference equation (\ref{eq:M-recurrenceBH}), which completes the proof of Proposition~\ref{prop2}.
\end{proof}

Solving special cases of the linear difference equation~(\ref{eq:M-recurrenceBH}) in Proposition~\ref{prop2} leads to average entropy formulas as presented in the corollary below.
\begin{corollary}\label{cor2}
Averages of von Neumann entropy~(\ref{eq:von-neumann-entropy}) and quantum purity~(\ref{eq:purity}) of Bures--Hall ensemble~(\ref{eq:UBH}) satisfy the difference equation
\begin{equation}\label{eq:kappa-recurrenceBH}
\!\!\left(\frac{1}{2(m+\alpha)}+1\right){M}_{m+1}
+\left(1-\frac{1}{2(m+\alpha)}\right){M}_{m-1}
-\left(\frac{4}{m(m+2\alpha)}+2\right){M}_m=2
\end{equation}
with the solution recovering the formula
\begin{equation}
\mathbb{E}[S] = \psi_0\!\left(m n-\frac{m^2}{2}+1\right)-\psi_0\!\left(n+\frac{1}{2}\right),
\label{eq:BH-entropy-solution} 
\end{equation}
and the difference equation
\begin{equation}
\left(\!\frac{2}{2(m+\alpha)}+1\right) \!M_{m+1}(2)+\left(1\!-\!\frac{2}{2(m+\alpha)}\right)\!M_{m-1}(2)-\left(\frac{12}{m(m+2\alpha)}+2\right)\!M_m(2)=0
\label{eq:uBHPdiff}
\end{equation}
with the solution recovering the formula
\begin{eqnarray}\label{eq:bhpurity}
\mathbb{E}[P]=\frac{m^2-2 m n-4 n^2-1}{2 n\left(m^2-2 m n-2\right)},
\end{eqnarray}
respectively.
\end{corollary}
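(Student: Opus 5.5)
The plan follows the template of Corollary~\ref{coro1}, now with the Bures--Hall difference equation~(\ref{eq:M-recurrenceBH}) of Proposition~\ref{prop2}.

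\textbf{Von Neumann entropy.} First differentiate~(\ref{eq:M-recurrenceBH}) in $s$ and set $s=1$, using~(\ref{eq:unconstrained-entropy}). This produces the left-hand side of~(\ref{eq:kappa-recurrenceBH}) plus inhomogeneous terms in $M_m(1)$ and $M_{m\pm1}(1)$, namely
\[
\frac{M_{m+1}(1)-M_{m-1}(1)}{2(m+\alpha)}-\frac{6M_m(1)}{m(m+2\alpha)}.
\]
These are evaluated with $M_m(1)=\mathbb{E}[R_1]=m(m+2\alpha)/2$, which comes from the scaling~(\ref{eq:Z-scaling}) exactly as in~(\ref{eq:F-derivative1}) at $t=0$. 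Since $M_{m+1}(1)-M_{m-1}(1)=2(m+\alpha)$, the first term is $1$ and the second is $-3$. The inhomogeneous part is therefore $-2$, and moving it to the right-hand side gives~(\ref{eq:kappa-recurrenceBH}).

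Setting $s=1$ in~(\ref{eq:M-recurrenceBH}) shows that $M_m(1)$ solves the homogeneous version of~(\ref{eq:kappa-recurrenceBH}). So I would substitute $M_m=M_m(1)C_m$ and reduce the equation to a first-order one in $D_m=C_m-C_{m-1}$.

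The equation is not self-adjoint, because its coefficients are $a_m=1+\frac{1}{2(m+\alpha)}$ and $b_m=1-\frac{1}{2(m+\alpha)}$. I therefore expect the main obstacle to be finding the summing factor. The reduced equation reads
\[
a_mM_{m+1}(1)D_{m+1}-b_mM_{m-1}(1)D_m=2.
\]
I would multiply it by a factor $w_m$ chosen so that $w_m b_m=w_{m-1}a_{m-1}$. Since $a_{m-1}=\frac{2m+2\alpha-1}{2(m+\alpha-1)}$ and $b_m=\frac{2m+2\alpha-1}{2(m+\alpha)}$, this condition gives $w_m\propto m+\alpha$. With $w_m=m+\alpha$, set
\[
Y_m=(m+\alpha-1)\,a_{m-1}M_m(1)M_{m-1}(1)D_m .
\]
This yields the telescoping form
\[
Y_{m+1}-Y_m=2(m+\alpha)M_m(1),
\]
where the extra factor $M_m(1)$ is inserted as in~(\ref{eq:LUEY}). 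Direct summation from $Y_1=0$ (note $M_0(1)=0$) gives $D_m$ as a rational function. I anticipate it takes the form $\frac{1}{m+\alpha-\frac12}$ plus a simple correction; this should be checked by guessing and verifying.

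The initial data are $M_0=0$ and $M_1=\frac{1}{\Gamma(\alpha+\frac12)}\int_0^\infty x^{\alpha+\frac12}\e^{-x}\ln x\,\dd x=(\alpha+\tfrac12)\psi_0(\alpha+\tfrac32)$. Summing $D_m$ then produces $C_m$ in terms of $\psi_0(m+\alpha+\frac12)$, via~(\ref{eq:psi0z}). Finally I would insert $M_m$ into the conversion~(\ref{eq:kappa-entropy-conversion}) with $M_m(1)=m(m+2\alpha)/2$ and $\alpha=n-m$. Since $M_m(1)+1=mn-\frac{m^2}{2}+1$, this should give~(\ref{eq:BH-entropy-solution}), with the rational parts cancelling. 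I would sanity-check the result at $m=1$.

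\textbf{Purity.} Setting $s=2$ in~(\ref{eq:M-recurrenceBH}) gives~(\ref{eq:uBHPdiff}) immediately. Next I would compute $M_1(2)=(\alpha+\frac12)(\alpha+\frac32)$ and $M_0(2)=0$. Then I would seek a solution $M_m(2)=M_m(1)\,g_m$ with $g_m$ rational, cubic over linear in $m$ and $\alpha$, since $\mathbb{E}[P]$ is rational. The ansatz would be fitted using the first few values generated by the recurrence and verified by direct substitution. Once $g_m$ is fixed, the conversion~(\ref{eq:HS-purity-conversion}) gives $\mathbb{E}[P]$, and with $\alpha=n-m$ this should match~(\ref{eq:bhpurity}).
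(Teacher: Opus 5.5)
Your proposal is correct and follows essentially the same route as the paper. You take the $s$-derivative at $s=1$ with $M_m(1)=m(m+2\alpha)/2$, use the same summing factor $m+\alpha$ and the same telescoping quantity $Y_m=(m+\alpha-\tfrac12)M_{m-1}(1)M_m(1)(C_m-C_{m-1})$, and finish with a rational ansatz for $M_m(2)$ followed by the moment conversions. When you carry out the summation, you will find $C_m-C_{m-1}=\frac{2}{2m+2\alpha-1}$ exactly, so $C_m=\psi_0(m+\alpha+\tfrac12)$, with no correction term and no rational parts to cancel. Likewise $M_m(2)/M_m(1)=\frac{5m^2+10m\alpha+4\alpha^2+1}{4(m+\alpha)}$ is quadratic over linear, which your more general cubic-over-linear ansatz already contains.
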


\begin{proof}
Differentiating \eqref{eq:M-recurrenceBH} with respect to \(s\) before setting \(s=1\), and using the definition~(\ref{eq:unconstrained-entropy}), gives
\begin{eqnarray}\label{eq:kappa-recurrenceBHorigin}
&&\left(\frac{1}{2(m+\alpha)}+1\right){M}_{m+1}
+\left(1-\frac{1}{2(m+\alpha)}\right){M}_{m-1}
-\left(\frac{4}{m(m+2\alpha)}+2\right){M}_m\nonumber\\
&=&\frac{6 M_m(1)}{m(m+2\alpha)}-\frac{M_{m+1}(1)-M_{m-1}(1)}{2(m+\alpha)}.
\end{eqnarray}
The difference equation~(\ref{eq:kappa-recurrenceBH}) is then established upon substituting
\begin{equation}\label{eq:first-momentbh}
M_m(1)=\mathbb{E}\!\left[R_1\right]=\frac{m(m+2\alpha)}{2},
\end{equation}
which is derived by using~(\ref{eq:spM}) and the fact that the trace $R_1$ of the ensemble~(\ref{eq:UBH}) follows a gamma density~\cite{Wei23C},
\begin{equation}\label{eq:trace-densityBH} 
f\!\left(R_1\right) = \frac{1}{\Gamma\!\left(\frac{m(m+2\alpha)}{2}\right)}R_1^{\frac{m(m+2\alpha)}{2}-1}\e^{-R_1}. 
\end{equation}

The required initial conditions to solve~(\ref{eq:kappa-recurrenceBH}) are obtained by using~(\ref{eq:unconstrained-entropy}), cf.~(\ref{eq:M1}), as
\begin{eqnarray}
{M}_0 &=& 0 \\
{M}_1 &=& \left(\alpha+\frac{1}{2}\right)\psi_0\!\left(\alpha+\frac{3}{2}\right). \label{eq:kappa-initial}
\end{eqnarray}
Similar to (\ref{eq:LUEKdiff}), one identifies a self-adjoint second-order difference equation~\cite{KP01} after multiplying by $m+\alpha$ on both sides of (\ref{eq:kappa-recurrenceBH}),
\begin{equation*}\label{eq:kappa-recurrenceBH1}
\left(m+\alpha+\frac{1}{2}\right)M_{m+1}+\left(m+\alpha-\frac{1}{2}\right)M_{m-1}-\left(2(m+\alpha)+\frac{4(m+\alpha)}{m(m+2\alpha)}\right)M_m=2(m+\alpha),
\end{equation*}
which reduces to a first-order one by factoring out the homogeneous solution. Specifically, since $M_m(1)$ satisfies the homogeneous version of~(\ref{eq:kappa-recurrenceBH}),
\begin{equation}\label{eq:M-recurrenceBH-s1}
\left(m+\alpha+\frac{1}{2}\right)M_{m+1}(1)+\left(m+\alpha-\frac{1}{2}\right)M_{m-1}(1)-\left(2(m+\alpha)+\frac{4(m+\alpha)}{m(m+2\alpha)}\right)M_m(1)=0,
\end{equation}
as seen by setting $s=1$ in (\ref{eq:M-recurrenceBH}). Due to this, introducing in~(\ref{eq:kappa-recurrenceBH}) $C_m$ in place of $M_m$ by the change of variable 
\begin{equation}
{M}_m=M_m(1) C_m
\label{eq:C-substitutionBH}
\end{equation}
leads to 
\begin{equation}\label{eq:kappa-recurrenceBH2}
\left(m+\alpha+\frac{1}{2}\right)M_{m+1}(1)
(C_{m+1}-C_m)
-\left(m+\alpha-\frac{1}{2}\right)M_{m-1}(1)
(C_m-C_{m-1})
=2(m+\alpha).
\end{equation}
Multiplying (\ref{eq:kappa-recurrenceBH2}) by $M_m(1)$ transforms it to the telescoping form
\begin{equation}\label{eq:telebh}
Y_{m+1}-Y_m=2(m+\alpha) M_m(1),
\end{equation}
where
\begin{equation}\label{YmBHdef}
Y_m = \left(m+\alpha-\frac{1}{2} \right)M_{m-1}(1)M_m(1)(C_m-C_{m-1}).
\end{equation}
Direct summation of (\ref{eq:telebh}) yields
\begin{eqnarray}
C_m-C_{m-1}=\frac{2}{2m+2\alpha-1}.
\end{eqnarray}
Summing once more and using the property~(\ref{eq:psi0z}), we obtain
\begin{eqnarray}
 C_m&=&\psi_0\!\left(m+\alpha+\frac{1}{2}\right),\label{eq:C-solutionBH}
\end{eqnarray}
and hence
\begin{equation}
{M}_m=M_m(1)\psi_0\!\left(m+\alpha+\frac{1}{2}\right).\label{eq:kappa-solutionBH}
\end{equation}
Inserting~(\ref{eq:first-momentbh}) and (\ref{eq:kappa-solutionBH}) into the moment relation (\ref{eq:kappa-entropy-conversion}) between two ensembles (\ref{eq:BH}) and (\ref{eq:UBH}), see for example~\cite{Wei20BHA,WW23}, and together with the definition~\eqref{eq:alpha}, we recover the average von Neumann entropy~\eqref{eq:BH-entropy-solution} obtained in~\cite{Sarkar2019, Wei20BHA, WHW26}.

For the quantum purity, the difference equation~(\ref{eq:uBHPdiff}) is obtained by setting $s=2$ in~(\ref{eq:M-recurrenceBH}). Similar to~(\ref{eq:LUEPdiff}), the homogeneous equation~(\ref{eq:uBHPdiff}) can be solved by seeking a polynomial solution~\cite{KP01} as
\begin{equation}
M_m(2)=\frac{m(m+2\alpha)\left(5m^2+10m\alpha+4\alpha^2+1\right)}{8(m+\alpha)}.
\end{equation}
Applying the moment conversion~(\ref{eq:HS-purity-conversion}), we recover the formula of average purity~(\ref{eq:bhpurity}) obtained in~\cite{Sarkar2019,Wei20BHA,LW21}. This completes the proof of Corollary~\ref{cor2}.
\end{proof}

\subsection{Muttalib--Borodin ensemble}\label{sec:MB}
Our result here for the difference equation of spectral moments is in terms of its covariance
\begin{equation}\label{def:cov}
\kappa\!\left(R_{s_1}, R_{s_2}\right) =
\mathbb E[R_{s_1} R_{s_2} ] - \mathbb E[R_{s_1}] \mathbb E[R_{s_2}] =
\mathbb E[R_{s_1} R_{s_2} ] - M_m(s_1) M_m(s_2).
\end{equation}

\begin{proposition}\label{prop3}
The spectral moment~(\ref{eq:Rs}) of the Muttalib--Borodin ensemble~(\ref{eq:LMB}) satisfies the relation
\begin{equation}\label{eq:MBd}
(s+\theta)\kappa\!\left(R_s,R_\theta\right)=\frac{\theta m~\!\Gamma(\alpha+1+\theta m)}{\Gamma(\alpha+1+\theta(m-1))}\left(M_{m+1}(s)+M_{m-1}(s)-2M_m(s)\right).
\end{equation}
\end{proposition}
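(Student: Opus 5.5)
The plan is to run the argument of Proposition~\ref{prop1}, with the single Toda time $t$ replaced by the pair $(t,u)$ of the two-dimensional Toda molecule, using the tau function~(\ref{eq:MB-tau-integral}).

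\textbf{Step 1: determinant form and bilinear equation.} Take $\rho(\mathbf{x})$ to be~(\ref{eq:LMB}). Write both factors $\prod_{i<j}(x_i-x_j)$ and $\prod_{i<j}(x_i^\theta-x_j^\theta)$ as Vandermonde determinants, in $x_i$ and in $x_i^\theta$ respectively. Andr\'eief's formula then gives
\begin{equation*}
\tau_m(t,q,u;s,\theta)=\det\bigl(\mu_{i,j}\bigr)_{i,j=0}^{m-1},\qquad \mu_{i,j}=\int_0^\infty x^{i+\theta j+\alpha}\e^{-(1-t)x+qx^s+ux^\theta}\,\dd x .
\end{equation*}
The moments satisfy $\partial_t\mu_{i,j}=\mu_{i+1,j}$ and $\partial_u\mu_{i,j}=\mu_{i,j+1}$. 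This is exactly the structure of a two-dimensional Toda molecule, so
\begin{equation*}
\frac{\partial^2}{\partial t\,\partial u}\ln\tau_m=\frac{\tau_{m+1}\tau_{m-1}}{\tau_m^2}.
\end{equation*}
It follows from the Sylvester/Jacobi identity applied to the bordered determinant; I would check the sign convention there directly. The $q$-deformation is a one-body weight factor, so it does not affect this.

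\textbf{Step 2: linearization in $q$.} Differentiate the bilinear equation in $q$ at $q=0$. This gives
\begin{equation*}
\partial_t\partial_u M_m(s;t,u)=R_m(t,u)\bigl(M_{m+1}+M_{m-1}-2M_m\bigr),
\end{equation*}
where $M_m(s;t,u)=\partial_q\ln\tau_m|_{q=0}$ and $R_m=\tau_{m+1}\tau_{m-1}/\tau_m^2$ evaluated at $q=0$. At $t=u=0$ the right-hand side involves $M_m(s)$, and $R_m(0,0)$ is read off from~(\ref{eq:zMB}). Since $\tau_m(0,0,0)=Z/m!=\theta^{m(m-1)/2}\prod_{i=1}^m\Gamma(i)\Gamma(\alpha+1+\theta(i-1))$, the ratio is
\begin{equation*}
R_m(0,0)=\theta\,m\,\frac{\Gamma(\alpha+1+\theta m)}{\Gamma(\alpha+1+\theta(m-1))},
\end{equation*}
which is exactly the prefactor in~(\ref{eq:MBd}).

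\textbf{Step 3: the mixed derivative on the left.} The left-hand side at $t=u=0$ is the joint cumulant $\kappa(R_s,R_1,R_\theta)$. I would reduce it through the trace-rescaling homogeneity, in the spirit of Lemma~\ref{lemma1}. The substitution $y_i=(1-t)x_i$ gives
\begin{equation*}
\tau_m(t,q,u)=(1-t)^{-N}\,\tau_m\bigl(0,q(1-t)^{-s},u(1-t)^{-\theta}\bigr),\qquad N=m(\alpha+1)+\theta m(m-1)/2 .
\end{equation*}
Hence $M_m(s;t,u)=(1-t)^{-s}G\bigl(u(1-t)^{-\theta}\bigr)$, where $G(u)=\partial_q\ln\tau_m(0,q,u)|_{q=0}$. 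Since $G'(0)=\kappa(R_s,R_\theta)$, we get $\partial_uM_m|_{u=0}=(1-t)^{-s-\theta}\kappa(R_s,R_\theta)$. Then $\partial_t$ at $t=0$ yields $(s+\theta)\kappa(R_s,R_\theta)$. Equating with Step 2 gives~(\ref{eq:MBd}).

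The main obstacle is Step 1: establishing the Toda molecule equation with the correct sign and normalization for the non-Hankel bimoment matrix. I would prove it via the Desnanot--Jacobi identity on the $(m+1)\times(m+1)$ determinant, after expressing $\partial_t\tau_m$, $\partial_u\tau_m$ and $\partial_t\partial_u\tau_m$ as bordered minors. The normalization check in Step 2 must also keep track of the $1/m!$ and $\theta^{m(m-1)/2}$ factors consistently.
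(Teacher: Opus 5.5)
Your proposal is correct and follows the paper's proof essentially step for step: Andr\'eief's formula gives the bimoment determinant, the Desnanot--Jacobi identity gives the two-dimensional Toda molecule equation (the sign you wanted to check does come out as $+$), you linearize in $q$ with $R_m(0,0)$ read off from~(\ref{eq:zMB}), and the trace-rescaling argument, which the paper packages as Lemma~\ref{lemma1} with $n=1$, $l=2$, yields $(s+\theta)\kappa(R_s,R_\theta)$. There is one harmless slip: the rescaling exponent should be $N=m(\alpha+1)+(1+\theta)m(m-1)/2$, since the factor $\prod_{i<j}(x_i-x_j)$ also contributes $m(m-1)/2$, but $N$ does not depend on $q$ and so drops out of $M_m(s;t,u)$.
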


\begin{proof}
In the tau function~(\ref{eq:MB-tau-integral}), we take $\rho({\bf x})$ to be the density~(\ref{eq:LMB}), so that
\begin{equation}\label{eq:tauMB}
\tau_m(t,q,u;s,\theta)=\frac{1}{m!}\int_{\R_+^m}\prod_{1\leq i<j\leq m}(x_i-x_j)\left(x_i^\theta-x_j^\theta\right)\prod_{i=1}^{m}x_i^\alpha\e^{-x_i+tx_i+qx_i^s+u x_i^\theta}\dd x_i.
\end{equation}
The corresponding spectral moment follows from
\begin{equation}
M_m(s)=M_m(s;0,0),
\end{equation}
where
\begin{equation}\label{eq:MB-deformed-moment}
M_m(s;t,u)=\left.\frac{\partial}{\partial q}\ln\tau_m(t,q,u;s,\theta)\right\rvert_{q=0}.
\end{equation}
By Andr\'eief's formula~\cite{Andreief1883, Fo18}, the tau function~(\ref{eq:tauMB}) is written as a bimoment determinant
\begin{equation}
\tau_m(t,q,u;s,\theta)=\det\!\left(\mu_{ij}(t,q,u;s,\theta)\right)_{i,j=0}^{m-1},
\end{equation}
where
\begin{equation}
\mu_{ij}(t,q,u;s,\theta)=\int_0^\infty x^{\alpha+i+\theta j}\e^{-(1-t)x+u x^\theta+q x^s}\dd x .
\end{equation}
The derivatives shift the row and column as
\begin{eqnarray}
\frac{\partial}{\partial t}\mu_{ij}&=&\mu_{i+1,j} \\
\frac{\partial}{\partial u}\mu_{ij}&=&\mu_{i,j+1}, 
\end{eqnarray}
where, together with multi-linear property of determinants, one arrives at
\begin{eqnarray}
\frac{\partial}{\partial t}\tau_m
&=&
\det\!\left(\mu_{ij}\right)_
{\substack{i=0,\ldots,m-2,m\\j=0,\ldots,m-1}}\\
\frac{\partial}{\partial u}\tau_m
&=&
\det\!\left(\mu_{ij}\right)_
{\substack{i=0,\ldots,m-1\\j=0,\ldots,m-2,m}}\\
\frac{\partial^2}{\partial t \partial u}\tau_m
&=&
\det\!\left(\mu_{ij}\right)_
{\substack{i=0,\ldots,m-2,m\\j=0,\ldots,m-2,m}}.
\label{eq:MB-shifted-determinants}  
\end{eqnarray}
The Desnanot--Jacobi determinant identity~\cite{SNV11} now relates the above quantities as
\begin{equation}\label{eq:MB-Jacobi-identity}
\tau_m\,\frac{\partial^2}{\partial t\partial u}\tau_m-\left(\frac{\partial}{\partial t}\tau_m\right)\left(\frac{\partial}{\partial u}\tau_m\right)=\tau_{m+1}\tau_{m-1},
\end{equation}
which can be rewritten as the two-dimensional Toda molecule equation~\cite{AdlerVanMoerbeke1997,Hirota04},
\begin{equation}\label{eq:2D-Toda}
\frac{\partial^2}{\partial t \partial u}\ln\tau_m=
\frac{\tau_{m+1}\tau_{m-1}}{\tau_m^2}.
\end{equation}
Differentiating \eqref{eq:2D-Toda} with respect to \(q\) before setting \(q=0\) gives
\begin{align}
\frac{\partial^2}{\partial t\partial u} M_m(s;t,u)
={}&R_m(t,u)\Bigl(M_{m+1}(s;t,u)+M_{m-1}(s;t,u)-2M_m(s;t,u)\Bigr),
\label{eq:MB-linearized-toda}
\end{align}
where
\begin{equation}
R_m(t,u)=\left.\frac{\tau_{m+1}(t,q,u;s,\theta)\tau_{m-1}(t,q,u;s,\theta)}{\tau_m^2(t,q,u;s,\theta)}\right\rvert_{q=0}.
\label{eq:MB-R-ratio}
\end{equation}
From the normalization constant~(\ref{eq:zMB}), we must have
\begin{equation}\label{eq:MB-partition-origin}
\tau_m(0,0,0;s,\theta)=\theta^{m(m-1)/2}\prod_{i=1}^{m}\Gamma(i)\,\Gamma\!\left(\alpha+1+\theta(i-1)\right),
\end{equation}
inserting which into (\ref{eq:MB-R-ratio}) gives 
\begin{equation}\label{eq:MB-toda-coefficient}
R_m(0,0)=\frac{\theta m~\!\Gamma(\alpha+1+\theta m)}{\Gamma(\alpha+1+\theta(m-1))}.
\end{equation}
Applying Lemma~\ref{lemma1} in Appendix~\ref{app1} for the case \(n=1\), \(l=2\), \(s_1=s\), \(s_2=\theta\), together with the definition~(\ref{eq:MB-deformed-moment}), one obtains
\begin{equation}\label{eq:MB-covariance-derivative}
\left.\frac{\partial^2}{\partial t\partial u}M_m(s;t,u)\right\rvert_{t=u=0}=(s+\theta)\kappa\!\left(R_s,R_\theta\right).
\end{equation}
Evaluating \eqref{eq:MB-linearized-toda} at \(t=u=0\) before inserting \eqref{eq:MB-toda-coefficient} and \eqref{eq:MB-covariance-derivative} into the resulting expression, we arrive at the claimed relation \eqref{eq:MBd}. This completes the proof of Proposition~\ref{prop3}.
\end{proof}

For the special case $\theta=1$ in Proposition~\ref{prop3}, applying the cumulant identity \eqref{eq:general-cumulant-scaling} to the left-hand side of (\ref{eq:MBd}) directly recovers the difference equation~(\ref{eq:LUEMdiff}) of the Hilbert--Schmidt ensemble in Proposition~\ref{prop1}. 

For the special case $\theta\to0$ in Proposition~\ref{prop3}, the limit establishes the difference equation of the Bogoliubov--Kubo--Mori ensemble, which in turn recovers average entropy formulas of the ensemble. The results are summarized in the corollary below, where, unlike the Hilbert–Schmidt ensemble of $\theta=1$, the limiting procedure is essential in the sense that there is no direct derivation available.

\begin{corollary}\label{co:BKM}
The average spectral moment (\ref{eq:spM}) of the Bogoliubov--Kubo--Mori ensemble~(\ref{eq:uBKM}) satisfies the difference equation
\begin{equation}\label{eq:BKMdd}
s\frac{\partial}{\partial\alpha}M_m(s)=m\left(M_{m+1}(s)+M_{m-1}(s)-2M_m(s)\right)
\end{equation}
with the solution recovering the formula of average entropy
\begin{equation}\label{eq:BKMS}
\mathbb{E}[S]=\psi_{0}(\beta+1)-\frac{1}{\beta}
\left(\sum_{i=0}^{m-1}\binom{m}{i+1}\frac{1}{i!}\left(\alpha+\frac{m+1}{i+2}\right)\psi_{i}(\alpha+1)+\frac{1}{2}m(m+1)\right),
\end{equation}
where $\beta=m\alpha+m(m+1)/2$.
\end{corollary}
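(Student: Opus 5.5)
The plan is to obtain (\ref{eq:BKMdd}) as the $\theta\to0$ limit of Proposition~\ref{prop3}, and then to solve the resulting mixed difference--differential equation at $s=1$ by forward recursion in $m$.

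For the limit, I first check that the ensemble (\ref{eq:LMB}) converges to (\ref{eq:uBKM}). Write $x_i^\theta-x_j^\theta=\theta(\ln x_i-\ln x_j)+O(\theta^2)$. The factor $\theta^{m(m-1)/2}$ in (\ref{eq:zMB}) then cancels the factors of $\theta$ in the Vandermonde-type product, and $\Gamma(\alpha+1+\theta(i-1))\to\Gamma(\alpha+1)$. Hence $M_m(s)$ and all covariances converge, by dominated convergence on $\R_+^m$.

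Next I expand both sides of (\ref{eq:MBd}) to first order in $\theta$.
\begin{itemize}
\item Left side. Since $R_\theta=\sum_i x_i^\theta=m+\theta\sum_i\ln x_i+O(\theta^2)$ and the constant $m$ does not contribute to a covariance, $\kappa(R_s,R_\theta)=\theta\,\kappa\bigl(R_s,\sum_i\ln x_i\bigr)+O(\theta^2)$. The left side is therefore $s\theta\,\kappa\bigl(R_s,\sum_i\ln x_i\bigr)+O(\theta^2)$.
\item Right side. The prefactor is $\theta m\,\Gamma(\alpha+1+\theta m)/\Gamma(\alpha+1+\theta(m-1))=\theta m+O(\theta^2)$.
\end{itemize}
Dividing by $\theta$ and letting $\theta\to0$ gives $s\,\kappa\bigl(R_s,\sum_i\ln x_i\bigr)=m\bigl(M_{m+1}(s)+M_{m-1}(s)-2M_m(s)\bigr)$ for the Bogoliubov--Kubo--Mori ensemble. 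Finally, $\alpha$ enters (\ref{eq:uBKM}) only through $\prod_i x_i^\alpha=\e^{\alpha\sum_i\ln x_i}$, together with a normalization that does not depend on the $x_i$. Hence $\partial_\alpha\mathbb{E}[R_s]=\kappa\bigl(R_s,\sum_i\ln x_i\bigr)$, which yields (\ref{eq:BKMdd}).

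For the entropy, I differentiate (\ref{eq:BKMdd}) in $s$ at $s=1$ and use (\ref{eq:unconstrained-entropy}), which gives
\[
\partial_\alpha M_m(1)+\partial_\alpha M_m=m\bigl(M_{m+1}+M_{m-1}-2M_m\bigr).
\]
I need $M_m(1)=\beta$ with $\beta=m\alpha+m(m+1)/2$. This follows from the homogeneity of (\ref{eq:uBKM}) under $x\mapsto cx$: the density has total degree $\beta-m$, so the trace is gamma distributed with shape $\beta$, as in (\ref{eq:trace-density}). As a consistency check, $\partial_\alpha\beta=m$ and the second difference of $\beta$ in $m$ equals $1$, so the $s=1$ case of (\ref{eq:BKMdd}) holds. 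The entropy equation then becomes
\[
M_{m+1}=2M_m-M_{m-1}+\frac{\partial_\alpha M_m+m}{m}.
\]
This determines every $M_m$ recursively from the initial data $M_0=0$ and $M_1=(\alpha+1)\psi_0(\alpha+2)$, computed as in (\ref{eq:M1}).

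I will make the ansatz $M_m=\sum_{i}c_i(m,\alpha)\,\psi_i(\alpha+1)+p_m$, where the $c_i$ are polynomial in $\alpha$ and $p_m$ is polynomial in $m$. Since $\partial_\alpha\psi_i=\psi_{i+1}$ by (\ref{eq:polygamma}), the recursion closes on this class. By uniqueness of the forward recursion, it then suffices to verify that the bracket in (\ref{eq:BKMS}), namely $\sum_{i=0}^{m-1}\binom{m}{i+1}\frac{1}{i!}\bigl(\alpha+\frac{m+1}{i+2}\bigr)\psi_i(\alpha+1)+\frac12 m(m+1)$, satisfies it. This requires rewriting $(\alpha+1)\psi_0(\alpha+2)$ via (\ref{eq:psi0z}) as $(\alpha+1)\psi_0(\alpha+1)+1$ to match the $m=1$ case. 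The formula (\ref{eq:BKMS}) then follows from the conversion (\ref{eq:kappa-entropy-conversion}), $\mathbb{E}[S]=\psi_0(\beta+1)-M_m/\beta$.

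The main obstacle is this verification. Matching coefficients of each $\psi_i(\alpha+1)$ reduces to a binomial identity in which Pascal's rule $\binom{m+1}{i+1}=\binom{m}{i+1}+\binom{m}{i}$, applied twice for the second difference, must balance the shifted term coming from $\partial_\alpha$ (index $i-1$, divided by $m$). The $\alpha$-linear and constant parts have to be tracked separately, and the top index $i=m$, where $\binom{m}{i+1}$ vanishes, needs care. I would first check $m=2,3$ explicitly and then carry out the general coefficient comparison.
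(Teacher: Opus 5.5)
Your proposal is correct. The first half is exactly the paper's derivation of (\ref{eq:BKMdd}): expand $R_\theta=m+\theta T_0+\mathcal{O}(\theta^2)$ and the prefactor $\theta m+\mathcal{O}(\theta^2)$ in (\ref{eq:MBd}), then identify $\kappa(R_s,T_0)=\partial_\alpha M_m(s)$.

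The second half takes a different route.
\begin{itemize}
\item \emph{The paper} keeps $s$ general. With $\mathcal{D}=-s\partial_\alpha$, the recursion becomes the three-term Laguerre recurrence, which gives $M_m(s)=L_{m-1}^{(1)}(\mathcal{D})\,\Gamma(\alpha+s+1)/\Gamma(\alpha+1)$. The entropy then follows by differentiating in $s$; the $s$-dependence of $\mathcal{D}$ produces the $m(m-1)/2$ term via $\partial_x L_m^{(a)}=-L_{m-1}^{(a+1)}$.
\item \emph{You} differentiate at $s=1$ first and use $M_m(1)=\beta$. You then verify the stated bracket against the resulting inhomogeneous recursion, with uniqueness coming from the forward recursion with data $M_0,M_1$.
\end{itemize}

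Your verification does close, and more easily than you expect.
\begin{itemize}
\item Write the coefficient of $\psi_i(\alpha+1)$ as $\frac{1}{i!}\bigl(\alpha\binom{m}{i+1}+\binom{m+1}{i+2}\bigr)$. This vanishes automatically for $i\ge m$, so the top index and the case $m=0$ need no separate treatment.
\item Its second difference in $m$ is $\frac{1}{i!}\bigl(\alpha\binom{m-1}{i-1}+\binom{m}{i}\bigr)$.
\item The coefficient of $\psi_i(\alpha+1)$ in $\partial_\alpha M_m/m$ matches this. The $\alpha$-part matches via $\binom{m}{i}=\frac{m}{i}\binom{m-1}{i-1}$, and the rest via $\binom{m}{i+1}+i\binom{m+1}{i+1}=m\binom{m}{i}$.
\item The polynomial part $\frac12 m(m+1)$ supplies the constant $1$.
\end{itemize}

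On the trade-off: your route is more elementary and avoids the operator calculus in $\mathcal{D}$. However, it presupposes the answer and yields only the entropy. The paper's route actually derives the formula, and it produces every real moment $M_m(s)$ from a single closed form, e.g.\ the purity at $s=2$.
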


\begin{proof}
In the limit \(\theta\to0\), the Muttalib--Borodin ensemble~\eqref{eq:LMB} degenerates to the Bogoliubov--Kubo--Mori ensemble~\eqref{eq:uBKM}. This follows from resolving the indeterminacy
\begin{equation}
\lim_{\theta\to 0}\frac{x_i^\theta-x_j^\theta}{\theta}=\ln x_i-\ln x_j
\end{equation}
as a consequence of the $\theta\to0$ expansion for $x_i>0$,
\begin{equation}\label{eq:xthetalim}
x_i^\theta=1+\theta\ln x_i+\mathcal{O}\!\left(\theta^2\right).
\end{equation}
Consequently, the spectral moments of the ensemble (\ref{eq:LMB}) reduce those of the ensemble (\ref{eq:uBKM}) as \(\theta\to0\).

Summing (\ref{eq:xthetalim}) over \(i=1,\ldots,m\) gives an expansion of spectral moments $R_\theta$ in the limit~\(\theta \to 0\) for a fixed $m$ as
\begin{equation}\label{eq:Rthetalim}
R_\theta=m+\theta T_0+\mathcal{O}\!\left(\theta^2\right)
\end{equation}
with $T_0$ denoting the random variable
\begin{equation}\label{eq:T0}
T_0=\sum_{i=1}^{m}\ln x_i,
\end{equation}
consistent with the notation in~\cite{HW26}. Inserting (\ref{eq:Rthetalim}) into the definition (\ref{def:cov}), one has
\begin{equation} \label{eq:covthetalim}
\kappa(R_s,R_\theta)
=\theta\kappa\!\left(R_s,T_0\right)+\mathcal{O}\!\left(\theta^2\right),
\end{equation}
where, on the other hand, it is known that joint cumulants involving $T_0$ can be generated by $\alpha$ derivatives~\cite{HW26},
\begin{equation}\label{eq:dalpha1}
\kappa\!\left(R_s,T_0\right)=\frac{\partial}{\partial\alpha} M_m(s).
\end{equation}
Finally, inserting~\eqref{eq:covthetalim} into~\eqref{eq:MBd} before taking the limit $\theta\to 0$ establishes the desired difference equation~\eqref{eq:BKMdd}.

We now move on to the average entropy formula~(\ref{eq:BKMS}). A general approach to solve the difference equation of the form~\eqref{eq:BKMdd} is through its the generating function. Here, we present an alternative approach that reveals the underlying structure of the ensemble's spectral moments. Introducing the differential operator
\begin{equation}\label{eq:D}
\mathcal D=-s\frac{\partial}{\partial\alpha},
\end{equation}
we can rewrite~\eqref{eq:BKMdd} as
\begin{equation}\label{eq:BKMddD}
M_{m+1}(s)=\left(-\frac{\mathcal D}{m}+2\right)M_m(s)-M_{m-1}(s).
\end{equation}
From the initial conditions, cf.~(\ref{eq:M1}),
\begin{eqnarray}
M_0(s)&=&0 \\
M_1(s)&=&\frac{1}{\Gamma(\alpha+1)}\int_0^\infty x^{\alpha+s}\e^{-x}\,\dd x =\frac{\Gamma(\alpha+s+1)}{\Gamma(\alpha+1)},
\end{eqnarray}
it follows inductively that
\begin{equation}\label{eq:pMD}
M_m(s)=P_{m-1}(\mathcal D)M_1(s),
\end{equation}
where \(P_m\) denotes a polynomial of degree \(m\). Inserting (\ref{eq:pMD}) into (\ref{eq:BKMddD}) gives
\begin{equation}\label{eq:reD}
\mathcal{D}P_{m-1}(\mathcal{D})=-m P_m(\mathcal{D})+2mP_{m-1}(\mathcal{D})-mP_{m-2}(\mathcal{D})
\end{equation}
with initial conditions computed from~(\ref{eq:BKMddD}) and~(\ref{eq:pMD}) as
\begin{eqnarray}
P_0(\mathcal{D}) &=& 1 \\
P_1(\mathcal{D}) &=& -\mathcal{D} + 2.
\end{eqnarray}

We observe that the relation~(\ref{eq:reD}) is nothing but the recurrence relation of Laguerre polynomials~\cite{Szego},
\begin{equation}\label{eq:recur}
xL_k^{(a)}(x)=-(k+a) L_{k-1}^{(a)}(x)+(2k+1+a)L_k^{(a)}(x)-(k+1)L_{k+1}^{(a)}(x)
\end{equation}
of the specialization $k= m-1$, $a=1$, $x=\mathcal{D}$, namely,
\begin{equation}\label{eq:PL}
P_{m}(\mathcal{D})=L_{m}^{(1)}(\mathcal{D}).
\end{equation}
Substituting~\eqref{eq:PL} into~\eqref{eq:pMD}, we obtain a closed-form expression of spectral moments
\begin{equation}\label{eq:uBKMMs}
M_m(s)=L_{m-1}^{(1)}(\mathcal{D})\frac{\Gamma(\alpha+s+1)}{\Gamma(\alpha+1)}.
\end{equation}
By the definition~\eqref{eq:unconstrained-entropy}, differentiating (\ref{eq:uBKMMs}) with respect to \(s\) and setting \(s=1\), we will show that 
\begin{equation}
M_m=\frac{m(m-1)}{2}+L_{m-1}^{(1)}\!\left(-\frac{\partial}{\partial\alpha}\right)(\alpha+1)\,\psi_0(\alpha+2).
\label{eq:uBKMV}
\end{equation}
The first term in (\ref{eq:uBKMV}) arises from the derivative of $L_{m-1}^{(1)}(\mathcal D)$ with respect to $s$, where we recall that $\mathcal D$ depends on $s$ by definition (\ref{eq:D}). The chain rule together with the identity~\cite{Szego},
\begin{equation}
\frac{\partial}{\partial x}L_m^{(a)}(x)=-L_{m-1}^{(a+1)}(x),
\end{equation} 
gives
\begin{eqnarray}
\left.\frac{\partial}{\partial s}L_{m-1}^{(1)}(\mathcal D)\right\rvert_{s=1}\frac{\Gamma(\alpha+2)}{\Gamma(\alpha+1)}&=&L_{m-2}^{(2)}\!\left(-\frac{\partial} {\partial\alpha}\right)\frac{\partial}{\partial\alpha}(\alpha+1) \nonumber \\ 
&=&L_{m-2}^{(2)}(0)=\frac{m(m-1)}{2},    
\end{eqnarray}
where the last two equalities follow from the definition~\cite{Szego},
\begin{equation}\label{eq:lpfs}
L_{m}^{(a)}(x)=\sum_{i=0}^{m}(-1)^i\binom{m+a}{m-i}\frac{x^i}{i!}.
\end{equation}
The second term in~\eqref{eq:uBKMV} arises from the derivative acting on \(\Gamma(\alpha+s+1)\), evaluated using the definition of the polygamma function~\eqref{eq:polygamma}.

Applying (\ref{eq:psi0z}) and (\ref{eq:lpfs}) to (\ref{eq:uBKMV}) leads to 
\begin{equation}
M_m=\frac{m(m+1)}{2}+\sum_{i=0}^{m-1}\binom{m}{i+1}\frac{1}{i!}\left(\alpha+\frac{m+1}{i+2}\right)\psi_i(\alpha+1),
\label{eq:uBKMV1}
\end{equation}
where we used the definition of polygamma functions~(\ref{eq:polygamma}). Finally, inserting (\ref{eq:uBKMV1}) into the moment relation~\cite{Mi25,SW26},
\begin{equation}
\mathbb{E}[S]=\psi_0(\beta+1)-\frac{M_m}{\beta}
\end{equation}
with $\beta=m\alpha+m(m+1)/2$, we recover the formula of average von Neumann entropy (\ref{eq:BKMS}) recently obtained in~\cite{SW26}. This completes the proof of Corollary~\ref{co:BKM}.
\end{proof}

\begin{remark}
The Muttalib--Borodin ensemble~(\ref{eq:LMB}) is a $\theta$-deformation of the Hilbert--Schmidt ensemble \eqref{eq:LUE}. The analogous $\theta$-deformation of the Bures--Hall ensemble \eqref{eq:UBH}, also obtained by replacing each factor $(x_i - x_j)^2$ by $(x_i - x_j)(x_i^\theta - x_j^\theta)$, is known to exhibit exactly solvable and integrable properties \cite{FL19,CLTY21}. An investigation of a moment recurrence analogous to (\ref{eq:MBd}) remains to be undertaken.
\end{remark}

\section{Relations to known spectral moment recurrences}\label{sec:smr}
In this section, we establish the connection between the difference relations of spectral moments $M_m(s)$ with respect to matrix dimension $m$ derived in Section~\ref{sec:results} and those with respect to power parameter $s$ known in the literature~\cite{Harer86,HT03, Ledoux2004,Ledoux2009ARF,CundenMezzadriOConnellSimm,WHW26}. This is done by deriving mixed difference equations in both $m$ and $s$ for the Hilbert--Schmidt~\eqref{eq:LUE} and Bures--Hall ensembles~\eqref{eq:UBH} in Section~\ref{sec:3.1} and Section~\ref{sec:3.2}, respectively. The additional ingredients are Virasoro constraints in the loop-equation form~\cite{MMMM91,FRW17}.

\subsection{Mixed difference equation of Hilbert--Schmidt ensemble}\label{sec:3.1}
\begin{proposition}\label{prop4}
The average spectral moment~(\ref{eq:spM}) of the Hilbert--Schmidt ensemble~(\ref{eq:LUE}) satisfies the mixed difference equation in $m$ and $s$,
\begin{equation}\label{eq:luemixed}
 m(m+\alpha)\left(M_{m+1}(s)-M_{m-1}(s)\right)=(s+2)M_m(s+1)-s(2m+\alpha)M_m(s),
\end{equation}
combining which with the difference equation~(\ref{eq:LUEMdiff}) in $m$ reproduces the known difference equation in $s$,
\begin{equation}\label{eq:LUE-s-recurrence}
(s+3)M_m(s+2)-(2s+3)(2m+\alpha)M_m(s+1)-s\left((s+1)^2-\alpha^2\right)M_m(s)=0.
\end{equation}
\end{proposition}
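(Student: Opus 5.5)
The plan is to combine the single-time Toda structure of Proposition~\ref{prop1} with one Virasoro constraint written in loop-equation form, as announced at the start of this section, and then to eliminate the $m$-shifts against \eqref{eq:LUEMdiff}.

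\textbf{Step 1: the Virasoro constraint.} Since the Hilbert--Schmidt Virasoro constraints naturally involve $R_2$, I would work with the two-time tau function
\begin{equation*}
\tau_m(t_1,t_2,q;s)=\det\bigl(\mu_{i+j}\bigr)_{i,j=0}^{m-1},\qquad \mu_k=\int_0^\infty x^{k+\alpha}\e^{-(1-t_1)x+t_2x^2+qx^s}\dd x,
\end{equation*}
which is still a Hankel determinant, hence a Toda-hierarchy tau function. Integrating $\sum_i\partial_{x_i}\bigl(x_i^2F\bigr)=0$ against the integrand $F$ of this tau function, and using
\begin{equation*}
\sum_{i\neq j}\frac{x_i^2}{x_i-x_j}=(m-1)R_1 ,
\end{equation*}
gives the $L_1$ constraint
\begin{equation*}
\Bigl((2m+\alpha)\partial_{t_1}-(1-t_1)\partial_{t_2}+2t_2\,\partial_{t_2}^{(3)}+qs\,\mathcal{X}_{s+1}\Bigr)\tau_m=0,
\end{equation*}
where $\partial_{t_2}^{(3)}$ is an $R_3$-insertion that vanishes at $t_2=0$ and $\mathcal{X}_{s+1}$ denotes insertion of $R_{s+1}$. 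I would differentiate this in $q$ at $q=0$ and then set $t_1=t_2=0$. The term $s\,\mathbb{E}[R_{s+1}]$ then appears linearly, alongside the joint cumulants $\kappa(R_s,R_1)$ and $\kappa(R_s,R_2)$.

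\textbf{Step 2: reducing the cumulants.} By Lemma~\ref{lemma1} (the trace-rescaling homogeneity, as in \eqref{eq:M-scaling}),
\begin{equation*}
\kappa(R_s,R_1)=sM_m(s).
\end{equation*}
This produces the term $-s(2m+\alpha)M_m(s)$ in \eqref{eq:luemixed}, while the insertion term supplies the coefficient $(s+2)$ on $M_m(s+1)$ once the $R_{s+1}$ contribution from $\partial_i x_i^{2}$ is combined with the one from the deformation. The remaining piece is the $t_2$-derivative $\partial_q\partial_{t_2}\ln\tau_m$, that is, $\kappa(R_s,R_2)$.

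\textbf{Step 3: the $t_2$-flow, and the main obstacle.} Here I would use the second Toda flow for Hankel determinants,
\begin{equation*}
\partial_{t_2}\ln\tau_m=\text{a bilinear expression in }\partial_{t_1}\ln\tau_{m\pm1},\ \tau_{m\pm1}\tau_{m\mp1}/\tau_m^2 ,
\end{equation*}
which in Lax form reads $\partial_{t_2}\ln\tau_m\sim a_m(b_m+b_{m-1})$ up to lower-order terms. Linearizing in $q$ and evaluating with the explicit $q=0$ data \eqref{eq:taup0} and \eqref{eq:R} should turn $\kappa(R_s,R_2)$ into
\begin{equation*}
m(m+\alpha)\bigl(M_{m+1}(s)-M_{m-1}(s)\bigr)
\end{equation*}
plus multiples of $M_m(s)$. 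I expect this to be the main obstacle: getting the correct second-flow formula, keeping track of the signs and factors from $b_m=\partial_{t_1}\ln(\tau_{m+1}/\tau_m)$, and checking that all leftover $M_m(s)$ terms cancel. As a safeguard I would test the final identity at $s=1$, where $M_m(1)=m(m+\alpha)$ and $M_m(2)=m(m+\alpha)(2m+\alpha)$ are known. If the bookkeeping of the second flow proves unmanageable, the fallback is to derive the mixed relation directly from the Christoffel--Darboux / three-term structure of Laguerre polynomials.

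\textbf{Step 4: the equation in $s$.} Given \eqref{eq:luemixed}, I would shift $s\to s+1$ in it and apply it once more. This expresses $M_m(s+2)$ through
\begin{equation*}
m(m+\alpha)\bigl(M_{m+1}(s+1)-M_{m-1}(s+1)\bigr),
\end{equation*}
which I rewrite, again via \eqref{eq:luemixed} at $m\pm1$, in terms of $M_{m+2}(s)$, $M_m(s)$ and $M_{m-2}(s)$. Applying \eqref{eq:LUEMdiff} at $m\pm1$ then eliminates $M_{m\pm2}(s)$ in favour of $M_{m\pm1}(s)$ and $M_m(s)$. The antisymmetric combination $M_{m+1}(s)-M_{m-1}(s)$ is removed once more by \eqref{eq:luemixed}, and the symmetric combination $M_{m+1}(s)+M_{m-1}(s)$ by \eqref{eq:LUEMdiff}. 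What remains is an $m$-independent-form relation among $M_m(s+2)$, $M_m(s+1)$ and $M_m(s)$, and collecting coefficients should give \eqref{eq:LUE-s-recurrence}.
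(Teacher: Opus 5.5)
Your overall architecture matches the paper's: a Virasoro/loop equation at $r=1$, a second Toda flow, linearization in $q$, Lemma~\ref{lemma1}, then elimination. Steps~1 and~4 are sound. Your elimination in Step~4 works; the paper does it more directly by solving \eqref{eq:LUEMdiff} and \eqref{eq:luemixed} for $M_{m\pm1}(s)$ and computing $M_{m+1}(s+1)$ two ways.

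Step~2, however, misplaces where the coefficients come from. The $L_1$ constraint, together with $\kappa(R_s,R_1)=sM_m(s)$ and $M_m(2)=(2m+\alpha)M_m(1)$, gives exactly $\kappa(R_s,R_2)=s\bigl((2m+\alpha)M_m(s)+M_m(s+1)\bigr)$. So the coefficient of $M_m(s+1)$ is $s$, not $s+2$: the only $R_{s+1}$ source is $\partial_{x_i}$ acting on $\e^{qx_i^s}$. Likewise, the $M_m(s)$ term enters with a plus sign here, so it is not yet the $-s(2m+\alpha)M_m(s)$ of \eqref{eq:luemixed}.

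The genuine gap is in Step~3. Write $a_m=\tau_{m+1}\tau_{m-1}/\tau_m^2$ and $b_m=\partial_{t_1}\ln(\tau_{m+1}/\tau_m)$. There is no local formula for $\partial_{t_2}\ln\tau_m$ itself: it equals $\mathbb{E}[R_2]=\sum_{j<m}(a_j+b_j^2+a_{j+1})$, a sum over the whole lattice. The local second-flow equation carries an extra $t_1$-derivative, $\partial_{t_1}\partial_{t_2}\ln\tau_m=a_m(b_m+b_{m-1})$, which is \eqref{eq:LUE-t1t2-Toda}.

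Consequences of using the correct equation:
\begin{itemize}
\item Linearizing in $q$ puts the third-order cumulant $\kappa(R_1,R_2,R_s)$ on the left.
\item Lemma~\ref{lemma1} with $n=1$, $l=2$ turns it into $(s+2)\kappa(R_2,R_s)$. This is the true origin of the factor $s+2$ in \eqref{eq:luemixed}.
\item The right side is $2M_m(2)\bigl(M_{m+1}(s)+M_{m-1}(s)-2M_m(s)\bigr)+sM_m(1)\bigl(M_{m+1}(s)-M_{m-1}(s)\bigr)$.
\item The $M_m(s)$ terms do not cancel. The second difference must be replaced by $\frac{s(s+1)}{m(m+\alpha)}M_m(s)$ via \eqref{eq:LUEMdiff}.
\item Only then does $(s+2)s(2m+\alpha)-2s(s+1)(2m+\alpha)=-s^2(2m+\alpha)$ produce the $-s(2m+\alpha)M_m(s)$ term after dividing by $s$.
\end{itemize}

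If you use your formula $\partial_{t_2}\ln\tau_m\sim a_m(b_m+b_{m-1})$ with the same data, you get $m(m+\alpha)\bigl(M_{m+1}(s)-M_{m-1}(s)\bigr)=M_m(s+1)-(2s+1)(2m+\alpha)M_m(s)$. This already fails at $s=1$: the right side is $-2(2m+\alpha)m(m+\alpha)$ instead of $+2(2m+\alpha)m(m+\alpha)$. Equivalently, it predicts $\kappa(R_1,R_2)=6M_m(2)$ rather than $2M_m(2)$, which is off by exactly the missing factor $s+2=3$. So your proposed $s=1$ safeguard would flag the error, but the plan as written does not contain the equation needed to fix it.
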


\begin{proof}
The Virasoro constraints~\cite{MM90,AdlerVanMoerbeke1995} reveal relations of spectral moments $M_m(s)$ of different power parameters $s$. For a nonnegative integer $r$, our starting point is the integration by parts underlying the Virasoro constraints~\cite{MM90,HH93,AdlerVanMoerbeke1995}, 
\begin{equation}\label{eq:ibp-virasoro}
0=\int_{\R_+^m}\sum_{i=1}^m\frac{\partial}{\partial x_i}\left(x_i^{r+1}R_s\!~\rho({\bf x})\right)\prod_{i=1}^m \dd x_i.
\end{equation}
Here, $s$ is a real-valued parameter so that the subsequent results involving $s$ in the proof are also valid for real \(s\). Consider now the Hilbert--Schmidt ensemble~(\ref{eq:LUE}), taking the logarithmic derivative of the density gives
\begin{equation}\label{eq:3.1drho}
\frac{\partial}{\partial x_i}\rho(\mathbf{x})=\rho(\mathbf{x})\left(2\sum_{j\ne i}\frac{1}{x_i-x_j}+\frac{\alpha}{x_i}-1\right),
\end{equation}
inserting which into~\eqref{eq:ibp-virasoro}, we have
\begin{eqnarray}\label{eq:expanded-virasoro}  
\!\!\!\!\!\!\!\!0=(r+1)\mathbb{E}[R_rR_s]+s\mathbb{E}[R_{s+r}]+\alpha\mathbb{E}[R_rR_s]-\mathbb{E}[R_{r+1}R_s]+2\mathbb{E}\!\left[R_s\sum_{i=1}^{m}\sum_{j\ne i}\frac{x_i^{r+1}}{x_i-x_j}\right].
\end{eqnarray}
Using the symmetry, see for example~\cite{Aomoto87,Forrester},
\begin{equation}
2\sum_{i=1}^m\sum_{j\ne i}\frac{x_i^{r+1}}{x_i-x_j}=\sum_{\ell=0}^{r}R_\ell R_{r-\ell}-(r+1)R_r,
\label{eq:pair-sum}
\end{equation}
one obtains the loop equation~\cite{FRW17,GGR20},
\begin{equation}
\mathbb{E}[R_{r+1}R_s]
=\alpha\mathbb{E}[R_rR_s]+\sum_{\ell=0}^{r}\mathbb{E}[R_\ell R_{r-\ell}R_s]+ s\mathbb{E}[R_{s+r}].
\label{eq:loopr}
\end{equation}
Setting $r=1$ in (\ref{eq:loopr}) gives
\begin{equation}\label{eq:virasoror2}
\mathbb{E}[R_sR_{2}] = (2m+\alpha) \mathbb{E}[R_sR_1]+s \mathbb{E}[R_{s+1}],
\end{equation}
which can be rewritten by using the definitions~(\ref{eq:spM}) and~(\ref{def:cov}) as
\begin{equation}\label{eq:loop1u}
\kappa(R_s,R_2)+M_m(s)M_m(2)=(2m+\alpha)\kappa(R_s,R_1)+(2m+\alpha)M_m(s)M_m(1)+sM_m(s+1).
\end{equation}
To simplify~(\ref{eq:loop1u}), we use Lemma~\ref{lemma1} that gives
\begin{equation}\label{eq:covr}
\kappa(R_s,R_1)=s M_m(s)
\end{equation}
and the required initial conditions  
\begin{eqnarray}
M_m(1)&=&m(m+\alpha) \label{eq:luem1} \\ 
M_m(2)&=&(2m+\alpha)m(m+\alpha) \label{eq:luem2}
\end{eqnarray}
can be, for example, obtained from~\cite{HT03,HW26}, the results simplify (\ref{eq:loop1u}) to
\begin{equation}\label{eq:lueloop1}
\kappa(R_s,R_2)=s\big((2m+\alpha)M_m(s)+M_m(s+1)\big).
\end{equation}

The covariance in~(\ref{eq:lueloop1}) can be eliminated by introducing another flow of the Toda hierarchy~\cite{UenoTakasaki1984,AdlerVanMoerbeke1995,AdlerVanMoerbeke1997, AdlerHorozovVanMoerbeke1999}. Higher flows of Toda equations provide additional independent relations that can be utilized to eliminate undesired quantities. Specifically, we extend the tau function (\ref{eq:tau-integral}) to include $t_2$,
\begin{equation}\label{eq:LUE-extended-tau}
\tau_m(t_1,t_2,q;s)=\frac{Z}{m!}\int_{\R_+^m}\rho({\bf x})\prod_{i=1}^{m}\e^{t_1 x_i+t_2 x_i^2+q x_i^s}\dd x_i,
\end{equation}
which is known to satisfy~\cite{UenoTakasaki1984,AdlerVanMoerbeke1995},
\begin{equation}\label{eq:LUE-t1t2-Toda}
\frac{\partial^2}{\partial t_1\partial t_2}\ln\tau_m=\frac{\tau_{m+1}\tau_{m-1}}{\tau_m^2}\frac{\partial}{\partial t_1}\ln\frac{\tau_{m+1}}{\tau_{m-1}}.
\end{equation}
Note that the equation~(\ref{eq:LUE-t1t2-Toda}) can be written more compactly in the bilinear  form~\cite{AdlerVanMoerbeke19992} as 
\begin{equation}
\frac{1}{2}D_{t_1}D_{t_2}\tau_m\cdot\tau_m=D_{t_1}\tau_{m+1}\cdot\tau_{m-1},
\end{equation}
where $D_x$ denotes the Hirota's derivative~\cite{Hirota04}, 
\begin{eqnarray}\label{eq:hirotaderiv1}
D_x^r f\cdot g&=&\left.\frac{\partial^r}{\partial y^r}f(x+y)g(x-y)\right|_{y=0}\\
D_x^{r_1}D_t^{r_2}f\cdot g&=&\left.\frac{\partial^{r_1}}{\partial y^{r_1}}
\frac{\partial^{r_2}}{\partial s^{r_2}}f(x+y,t+s)g(x-y,t-s)\right|_{y=0,\,s=0}.\label{eq:hirotaderiv2}
\end{eqnarray}

Now taking the derivative of (\ref{eq:LUE-t1t2-Toda}) with respect to \(q\) and setting \(t_1=t_2=q=0\), by the definition~(\ref{eq:spM}) and the fact that 
\begin{eqnarray}
\left.\frac{\partial^2}{\partial t_1\partial q}\ln\tau_m(t_1,t_2,q;s)\right\rvert_{t_1=t_2=q=0}&=&\kappa(R_1,R_s) \\
\left.\frac{\partial^3}{\partial t_1\partial t_2\partial q}\ln\tau_m(t_1,t_2,q;s)\right\rvert_{t_1=t_2=q=0}&=&\kappa(R_1,R_2,R_s), \label{eq:LUEkappads}
\end{eqnarray}
we obtain 
\begin{eqnarray}
\kappa(R_1,R_2,R_s)&=&M_m(1)\left(M_{m+1}(s)+M_{m-1}(s)-2M_m(s)\right)\left(M_{m+1}(1)-M_{m-1}(1)\right)\nonumber\\
&&+\!~M_m(1)\left(\kappa_{(m+1)}(R_1,R_s)-\kappa_{(m-1)}(R_1,R_s)\right). \label{eq:LUEt1t2-intermediate}
\end{eqnarray}
The above results are written using the definition of joint cumulant of $l$ random variables \(X_1,\ldots,X_l\) as~\cite{PT11},
\begin{equation}\label{eq:joint-cumulant-def}
\kappa(X_1,\ldots,X_l)=\left.\frac{\partial^l}{\partial u_1\cdots\partial u_l}\ln\mathbb{E}\!\left[\e^{\sum_{i=1}^l u_iX_i}\right]\right\rvert_{{\bf u}={\bf 0}}
\end{equation}
that extends the definition of covariance~\eqref{def:cov}, where the subscript of joint cumulant $\kappa$ in parentheses in~(\ref{eq:LUEt1t2-intermediate}) denotes the matrix dimension when it differs from $m$. The equation~(\ref{eq:LUEt1t2-intermediate}) is simplified, by invoking Lemma~\ref{lemma1},
\begin{eqnarray}
\kappa(R_1,R_s)&=&s M_m(s) \\
\kappa(R_1,R_2,R_s)&=&(s+2)\kappa(R_2,R_s)
\end{eqnarray}
and the initial conditions~(\ref{eq:luem1})--(\ref{eq:luem2}), to
\begin{eqnarray}
(s+2)\kappa(R_2,R_s)&=&2M_m(2)\left(M_{m+1}(s)+M_{m-1}(s)-2M_m(s)\right) \nonumber\\
&&+~\!sM_m(1)\left(M_{m+1}(s)-M_{m-1}(s)\right). \label{eq:LUEt1t2}  
\end{eqnarray}
The mixed difference equation~\eqref{eq:luemixed} is then established by substituting~\eqref{eq:lueloop1} into~\eqref{eq:LUEt1t2} and simplifying the resulting equation using the obtained difference equation~\eqref{eq:LUEMdiff}.

We now derive the difference equation \eqref{eq:LUE-s-recurrence} in~\cite{HT03} from the difference equations~\eqref{eq:LUEMdiff} and~\eqref{eq:luemixed}. Solving \(M_{m+1}(s)\) and \(M_{m-1}(s)\) from the equations~\eqref{eq:LUEMdiff} and~\eqref{eq:luemixed} gives
\begin{eqnarray}
M_{m+1}(s)&=&M_m(s)+\frac{(s+2)M_m(s+1)+s\left(s+1-2m-\alpha\right)M_m(s)}{2m(m+\alpha)} \label{eq:LUE-shift-plus}\\
M_{m-1}(s)&=&M_m(s)+\frac{-(s+2)M_m(s+1)+s\left(s+1+2m+\alpha\right)M_m(s)}{2m(m+\alpha)}. \label{eq:LUE-shift-minus}
\end{eqnarray}
Shifting \(s\to s+1\) in~\eqref{eq:LUE-shift-plus} and \(m\to m+1\) in~\eqref{eq:LUE-shift-minus} before applying~\eqref{eq:LUE-shift-plus}
to the latter in eliminating \(M_{m+1}(s)\) leads to
\begin{eqnarray*}
M_{m+1}(s+1)&=&\frac{s+3}{2m(m+\alpha)}M_m(s+2)+\left(1+\frac{(s+1)(s+2-2m-\alpha)}{2m(m+\alpha)}\right)M_m(s+1) \label{eq:LUE-shift-s} \\ 
M_{m+1}(s+1)&=&\left(1+\frac{(s+2)(2m+\alpha+s+1)}{2m(m+\alpha)}\right)M_m(s+1)+\frac{s\left((s+1)^2-\alpha^2\right)}{2m(m+\alpha)}M_m(s), \label{eq:LUE-shift-m}
\end{eqnarray*}
from which one immediately recovers the difference equation \eqref{eq:LUE-s-recurrence} in $s$ obtained in~\cite{HT03}. This completes the proof of Proposition~\ref{prop4}.
\end{proof}

\subsection{Mixed difference equation of Bures--Hall ensemble}\label{sec:3.2}
\begin{proposition}\label{prop5}
The average spectral moment~(\ref{eq:spM}) of the Bures--Hall ensemble~(\ref{eq:UBH}) satisfies the mixed difference\footnote{Expressions of the coefficients $f_i$ are omitted here for brevity, which are provided in the online repository at \href{https://doi.org/10.5281/zenodo.22133470}{doi.org/10.5281/zenodo.22133470}}  in $m$ and $s$,
\begin{equation}\label{eq:BHmixed}
f_1M_m(s+2)+f_2M_{m+1}(s+2)+f_3M_m(s)+f_4M_{m+1}(s)=0,
\end{equation}
combining which with the difference equation~(\ref{eq:M-recurrenceBH}) reproduces the known difference equation\footnote{Expressions of the coefficients $g_i$ are found in~\cite{WHW26}.} in $s$, 
\begin{equation}\label{eq:s-recurrenceBH}
g_1 M_m(s+2)=g_2M_m(s)+g_3M_m(s-2),
\end{equation}
\end{proposition}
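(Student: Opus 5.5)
I will follow the template of the proof of Proposition~\ref{prop4}: combine a loop equation (Virasoro constraint) for the Bures--Hall density~(\ref{eq:UBH}) with a higher flow of the relevant Pfaff/B-Toda hierarchy, and eliminate the unwanted joint cumulants.

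\emph{Step 1: loop equation.} I start from the integration by parts~(\ref{eq:ibp-virasoro}) with $\rho(\mathbf{x})$ now the density~(\ref{eq:UBH}). The logarithmic derivative is
\begin{equation*}
\frac{\partial}{\partial x_i}\ln\rho(\mathbf{x})=\sum_{j\ne i}\left(\frac{2}{x_i-x_j}-\frac{1}{x_i+x_j}\right)+\frac{\alpha-\frac12}{x_i}-1.
\end{equation*}
The $1/(x_i-x_j)$ part symmetrizes as in~(\ref{eq:pair-sum}). The $1/(x_i+x_j)$ part is the novel term: $x_i^{r+1}/(x_i+x_j)$ does not reduce to a polynomial in the power sums $R_\ell$. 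To make it reduce, I would use the power $x_i^{s+1}$ or a pair of powers separated by two. Symmetrizing gives
\begin{equation*}
\frac{x_i^{k}+x_j^{k}}{x_i+x_j},
\end{equation*}
which is polynomial for odd $k$. Combining $r$ and $r+2$ (this is why the statement shifts $s$ by $2$), I will choose $r$ to obtain an identity relating $\mathbb{E}[R_{s+2}]$, $\mathbb{E}[R_s]$, and covariances $\kappa(R_s,R_1)$, $\kappa(R_s,R_2)$, possibly $\kappa(R_{s+2},R_1)$.

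\emph{Step 2: simplify the loop equation.} Lemma~\ref{lemma1} removes every covariance with $R_1$, giving $\kappa(R_1,R_s)=sM_m(s)$. The needed low moments are $M_m(1)$ from~(\ref{eq:first-momentbh}) and $M_m(2)$ from Corollary~\ref{cor2}.

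\emph{Step 3: a higher Pfaff--Toda flow.} To eliminate the remaining covariance, such as $\kappa(R_2,R_s)$, I extend the tau function~(\ref{eq:bhtau-definition}) with a time $t_2$ coupled to $x_i^2$, keeping the Pfaffian representation~(\ref{eq:Pfa})--(\ref{eq:Pfa2}). From the B-Toda (Pfaff lattice) hierarchy I then take the bilinear equation coupling $t_1$ and $t_2$. Following~(\ref{eq:log-B-Toda}), I differentiate it in $q$ and set $t_1=t_2=q=0$. Lemma~\ref{lemma1} turns $\kappa(R_1,R_2,R_s)$ into $(s+2)\kappa(R_2,R_s)$ and removes the $t_1$-derivatives of neighbouring-$m$ quantities. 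This yields a relation expressing $\kappa(R_2,R_s)$ through $M_{m\pm1}(s)$ and $M_m(s)$, with coefficients given by $U_m$ in~(\ref{eq:U-explicit}) and derivatives of $\ln\tau_m(t,0)$. Substituting it into Step 2, and using~(\ref{eq:M-recurrenceBH}) to remove $M_{m-1}$ in favour of $M_m$ and $M_{m+1}$, produces~(\ref{eq:BHmixed}). Any appearance of $M_{m+1}(s+2)$ comes from writing the analogous relation at the shifted argument $s+2$.

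\emph{Step 4: the recurrence in $s$.} I treat~(\ref{eq:BHmixed}) and the $m$-recurrence~(\ref{eq:M-recurrenceBH}), applied at arguments $s$ and $s+2$ and at dimensions $m$ and $m+1$, as a linear system. Eliminating $M_{m+1}(\cdot)$ and $M_{m-1}(\cdot)$ leaves a relation among $M_m(s+2)$, $M_m(s)$ and $M_m(s-2)$, which after shifting is~(\ref{eq:s-recurrenceBH}). I then compare it with~\cite{WHW26}.

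\emph{Main obstacle.} The hardest part is identifying the correct second bilinear equation of the Pfaff lattice hierarchy in $(t_1,t_2)$ for the signed tau function~(\ref{eq:tautilde}), with the right sign bookkeeping. Its linearization involves $\partial_{t_2}$ of neighbouring tau functions, which are not scale-trivial in the way $t_1$ is. Failing a clean form, I would instead obtain the needed covariance from a second loop equation with a different $r$, and close the system purely with loop equations plus~(\ref{eq:M-recurrenceBH}).
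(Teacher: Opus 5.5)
Your overall template (Virasoro constraint, higher flows of the Pfaffian hierarchy, Lemma~\ref{lemma1}, then elimination with \eqref{eq:M-recurrenceBH}) matches the paper's, and your Step~4 is fine. The gap is in Steps~1--3: the elimination does not close as you set it up.

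\textbf{The loop equation has two unknown covariances, not one.} The constraint you need uses the test function $x_i^3R_s$, i.e.\ $r=2$ in \eqref{eq:ibp-virasoro}. The power $r+1$ must be odd so that $(x_i^{r+1}+x_j^{r+1})/(x_i+x_j)$ is polynomial. The case $r=0$ only reproduces Lemma~\ref{lemma1}, and a test function like $x_i^{s+1}$ is of no use for real $s$. With $r=2$, the weight $\e^{-x_i}$ unavoidably contributes $-\mathbb{E}[R_3R_s]$, and one gets $\mathbb{E}\bigl[sR_{s+2}+(m+\alpha)R_sR_2+\tfrac32R_sR_1^2-R_sR_3\bigr]=0$. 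So after Lemma~\ref{lemma1} (which now also needs $M_m(3)$), both $\kappa(R_s,R_2)$ and $\kappa(R_s,R_3)$ remain unknown. Equations built only on the $t_1$, $t_2$, $q$ flows never contain $\kappa(R_s,R_3)$. Hence even a successful Step~3 would leave this term in the loop equation. Your fallback of adding loop equations with larger even $r$ does not close either. Each one brings in $M_m(s+r)$, new covariances $\kappa(R_s,R_r)$ and $\kappa(R_s,R_{r+1})$, and third-order cumulants such as $\kappa(R_2,R_2,R_s)$.

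\textbf{The missing ingredient is the $t_3$ flow.} It also resolves the obstacle you flagged. Expand the charged-BKP generating series at $m_2=m_1+3$, $\mathbf y=(y,0,\ldots)$; the $y^0$ and $y^1$ coefficients give \eqref{eq:bkp1} and \eqref{eq:bkp2}.
\begin{itemize}
\item Linearized in $q$, \eqref{eq:bkp1} gives only the first difference $\kappa(R_s,R_2)-\kappa_{(m+1)}(R_s,R_2)$ in terms of $M_{m+i}(s)$, confirming your worry about neighbouring tau functions.
\item The $t_2q$-derivative of \eqref{eq:log-B-Toda} is \eqref{eq:btodaflow}, which is second order in $m$ for $\kappa(R_s,R_2)$. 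So, unlike \eqref{eq:LUEt1t2}, neither equation alone gives $\kappa(R_s,R_2)$ in closed form.
\item The linearization of \eqref{eq:bkp2} contains $\kappa(R_s,R_3)-\kappa_{(m+1)}(R_s,R_3)$. Eliminating both $R_3$-covariances with the loop equation written at dimension $m$ \emph{and} at dimension $m+1$ gives a second relation between $\kappa(R_s,R_2)$ and $\kappa_{(m+1)}(R_s,R_2)$.
\item Solving this $2\times2$ system together with the linearized \eqref{eq:bkp1} expresses $\kappa(R_s,R_2)$ through $M_m(s+2)$, $M_{m+1}(s+2)$ and $M_{m+i}(s)$.
\item Inserting $m$-shifted copies into \eqref{eq:btodaflow} removes all covariances. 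Repeated use of \eqref{eq:M-recurrenceBH} then reduces the result to \eqref{eq:BHmixed}.
\end{itemize}
In particular, $M_{m+1}(s+2)$ arises from the loop equation at dimension $m+1$, not from rewriting a relation at the shifted argument $s+2$ as you suggest.
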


\begin{proof}
The Virasoro constraints~\cite{MM90,AdlerVanMoerbeke1995} applied for the Bures--Hall case boil down to the integration by parts
\begin{equation}\label{eq:insx^3R_k}
0=\int_{\mathbb R_+^m}\sum_{i=1}^{m}
\frac{\partial}{\partial x_i}\left(x_i^3R_s~\!\rho(\mathbf{x})\right)\prod_{i=1}^m \dd x_i.
\end{equation}
Inserting the density~(\ref{eq:UBH}) into~(\ref{eq:insx^3R_k}) and evaluating the corresponding derivatives, similarly to (\ref{eq:3.1drho})--(\ref{eq:pair-sum}), one has
\begin{equation}\label{eq:insx^3R_ksimplify}
0=\int_{\mathbb R_+^m}
\left(sR_{s+2}+(m+\alpha)R_sR_2+\frac{3}{2}R_sR_1^2-R_sR_3\right)
\rho(\mathbf{x})\prod_{i=1}^m \dd x_i.
\end{equation}
Using the moment-cumulant relation~\cite{PT11,HW26} before applying  Lemma~\ref{lemma1} to simplify joint cumulants involving $R_1$, the above equation~(\ref{eq:insx^3R_ksimplify}) turns into
\begin{equation}\label{eq:virasoroid}
\kappa\!\left(R_s,R_3\right)=(m+\alpha)\kappa\!\left(R_s,R_2\right)+sM_m\!\left(s+2\right)+\frac{3s}{2}\left(m\left(m+2\alpha\right)+s+1\right)M_m(s),
\end{equation}
where have utilized the initial conditions $M_m(i)$, $i=1,2,3$, obtained from~\cite{HT03,HW26}. 

The covariance in~(\ref{eq:virasoroid}) can be eliminated by introducing higher flows of the B-Toda hierarchy~\cite{Date1982,Jimbo1983,AdlerVanMoerbeke1995,AdlerVanMoerbeke1997, AdlerHorozovVanMoerbeke1999,HuLi17,LiYu2022}, where we need the deformation involving $t_i$, $i=1,2,3$, as
\begin{equation}\label{eq:bh-extended-tau}
\tau_m\!\left(t_1,t_2, t_3,q;s\right)=\frac{Z}{m!}\int_{\R_+^m}\rho(\mathbf{x})\prod_{i=1}^m\e^{t_1x_i+t_2x_i^2+t_3x_i^3+q x_i^s}\dd x_i.
\end{equation}
By applying the de Bruijn's formulae~\cite{deBruijn1955} and Schur's Pfaffian identities~\cite{Schur1911}, the signed tau function
\begin{equation}\label{eq:tautilde1}
\Tilde{\tau}_m=(-1)^{\lfloor\frac{m}{2}\rfloor}\tau_m,
\end{equation}
admits a Pfaffian representation~\cite{HuLi17}
\begin{eqnarray}
\Tilde{\tau}_{2m} &=& \operatorname{Pf}(0,1,\cdots,2m-1) \label{eq:sec3Pfa} \\
\Tilde{\tau}_{2m+1} &=& \operatorname{Pf}\left(d_0,0,1,\cdots,2m\right) \label{eq:sec3Pfa2}
\end{eqnarray}
with entries 
\begin{eqnarray}
\!\!\!\!\!\!\!\!\!\!\!\!\!\operatorname{Pf}(i,j) &=& \iint_{\mathbb{R}_{+}^2} x^{i+\alpha-\frac{1}{2}} y^{j+\alpha-\frac{1}{2}}\frac{x-y}{x+y}\e^{-(x+y)+\sum_{r=1}^3 t_r(x^r+y^r)+q(x^s+y^s)} \dd x \dd y \\
\!\!\!\!\!\!\!\!\!\!\!\!\!\operatorname{Pf}\left(d_0, i\right)\!&=&\!\int_{\mathbb{R}_{+}} \!\!x^{i+\alpha-\frac{1}{2}} \e^{-(1-t_1)x+t_2 x^2+t_3 x^3+qx^s} \dd x.\label{eq:sec3Pfabeta}
\end{eqnarray}
These Pfaffian entries satisfy the dispersion relations
\begin{eqnarray}\label{eq:dispersion1}
\frac{\partial}{\partial t_k}\operatorname{Pf}(d_0,i)
&=& \operatorname{Pf}(d_0,i+k)\\
\frac{\partial}{\partial t_k}\operatorname{Pf}(i,j)
&=& \operatorname{Pf}(i+k,j)+\operatorname{Pf}(i,j+k),
\label{eq:dispersion2}
\end{eqnarray}
which is known as the standard skew-moment evolution underlying the Pfaff lattice~\cite{AdlerHorozovVanMoerbeke1999}. The Pfaffian representation~(\ref{eq:sec3Pfa})--(\ref{eq:sec3Pfa2}), together with the dispersion relations~(\ref{eq:dispersion1})--(\ref{eq:dispersion2}), place the signed tau function within the charged-BKP hierarchy~\cite{KacvandeLeur98,vandeluer01}. The Hirota bilinear equations of the charged-BKP hierarchy are encoded in the generating series~\cite{KacvandeLeur98} of dimensions $m_1$ and $m_2$ as
\begin{eqnarray}\label{eq:geneseries}
\!\!\!\!\!\!\!\! 0 &=& \frac{1}{2}\left((-1)^{m_1+m_2}\!-\!1\right) \e^{\sum_{r=1}^{\infty} y_r D_{t_r}} \Tilde{\tau}_{m_1} \cdot \Tilde{\tau}_{m_2}+\sum_{j=0}^{\infty}\left( P_j(2 \mathbf{y}) P_{j+m_1-m_2-1}(-D) \right.\nonumber \\
\!\!\!\!\!\!\!\!&& \!\!\!\!  \!\!\!\! \!\!\!\! \!\!\!\! \!\!\!\!\left.\times \e^{\sum_{r=1}^{\infty} y_r D_{t_r}} \Tilde{\tau}_{m_1-1} \cdot \Tilde{\tau}_{m_2+1}+ P_j(-2 \mathbf{y}) P_{j+m_2-m_1-1}(D) \e^{\sum_{r=1}^{\infty} y_r D_{t_r}} \Tilde{\tau}_{m_1+1} \cdot \Tilde{\tau}_{m_2-1}\right),
\end{eqnarray}
where $P_i$ denotes the elementary Schur functions~\cite{Kac90},
\begin{eqnarray}
\sum_{i\in \mathbb{Z}}P_i({\bf z})\lambda^i=\e^{\sum_{r \geq 1} z_r \lambda^r}
\end{eqnarray}
and
\begin{equation}
D=\left(D_{t_1},\frac{D_{t_2}}{2},\frac{D_{t_3}}{3},\ldots\right).
\end{equation}
Now specializing the generating series~(\ref{eq:geneseries}) to the case
$m_2=m_1+3$ and $\mathbf{y}=(y,0,0,\ldots)$,  expansion in $y$ in the resulting expression, together with the sign change~(\ref{eq:tautilde1}), yields
\begin{eqnarray}\label{eq:y^0y^1}
0&=&\left(-\tau_{m_1} \tau_{{m_1}+3}+P_2(D) \tau_{{m_1}+1}\cdot\tau_{{m_1}+2}\right) \nonumber \\
&&~\!+y\left(-D_{t_1}\tau_{m_1}\cdot \tau_{m_1+3}+\left(P_2(D)D_{t_1}-2P_3(D)\right)\tau_{m_1+1}\cdot\tau_{m_1+2}\right)+\mathcal{O}\!\left(y^2\right).
\end{eqnarray} 
Setting $m_1 =m-1$ in~(\ref{eq:y^0y^1}) and collecting the coefficients of $y^0$ and $y^1$, we obtain
\begin{equation}\label{eq:bkp1} 
\left(D_{t_1}^2+D_{t_2}\right)\tau_m\cdot\tau_{m+1}=2\tau_{m+2}\tau_{m-1},
\end{equation}
and
\begin{equation}\label{eq:bkp2} 
\left(4D_{t_3}+3D_{t_1}D_{t_2}-D_{t_1}^3\right)\tau_m\cdot\tau_{m+1} = 6D_{t_1}\tau_{m+2}\cdot\tau_{m-1},
\end{equation}
respectively. 

Taking derivatives of $q$ on both sides of (\ref{eq:bkp1}) and evaluating at $q=t_i=0$, $i=1,2,3$, gives
\begin{equation}\label{eq:1sthirota}
\kappa\!\left(R_s,R_2\right)-\kappa_{(m+1)}\!\left(R_s,R_2\right)=\sum_{i=-1}^{2}a_{i}M_{m+i}(s),
\end{equation}
where the same procedure applied to~(\ref{eq:bkp2}) gives 
\begin{equation}\label{eq:2ndhirota}
4\left(\kappa\!\left(R_s,R_3\right)-\kappa_{(m+1)}\!\left(R_s,R_3\right)\right)+c_0\kappa\!\left(R_s,R_2\right)+c_1\kappa_{(m+1)}\!\left(R_s,R_2\right)
=\sum_{i=-1}^{2} b_{i}M_{m+i}(s)
\end{equation}
with the coefficients $c_0$, $c_1$, $a_i$, $b_i$ omitted being rational functions of $m$ and $s$. Solving~(\ref{eq:virasoroid}) and~(\ref{eq:2ndhirota}), we can represent the difference of 
\begin{equation}
\kappa\!\left(R_s,R_2\right)
\end{equation}
and
\begin{equation}
\kappa_{(m+1)}\!\left(R_s,R_2\right)
\end{equation}
in terms of a linear combination of $M_{m}\!\left(s+2\right)$, $M_{m+1}\left(s+2\right)$, and $M_{m+i}(s)$, $i=-1,0,1,2$, independent from the difference of the two covariances~(\ref{eq:1sthirota}). Solving this two-by-two equation system gives
\begin{equation}\label{eq:rkr2mkk+2}
\kappa(R_s,R_2)=u_{0}M_m(s+2)+ u_{1}M_{m+1}(s+2)+\sum_{i=-1}^{2}v_{i}M_{m+i}(s),
\end{equation}
where $u_0$, $u_1$ and $v_i$ are certain rational coefficients. On the other hand, similar to~(\ref{eq:LUEkappads}), we have from definition~(\ref{eq:bh-extended-tau}) that
\begin{equation}
\left.\frac{\partial^4}{\partial t_1^2 \partial t_2 \partial q}\ln\tau_m\!\left(t_1,t_2,t_3,q;s\right)\right\rvert_{t_1=t_2=t_3=q=0}=(s+2)(s+3)\kappa(R_s,R_2).
\end{equation}
Since the tau function~(\ref{eq:bh-extended-tau}) also satisfies the B-Toda equation~(\ref{eq:log-B-Toda}), taking derivatives of~(\ref{eq:log-B-Toda}) with respect to $q$, $t_1$, and $t_2$, with relevant terms evaluated using Lemma~\ref{lemma1}, one has
\begin{eqnarray}\label{eq:btodaflow}
&&\frac{m(m+2\alpha)}{4(m+\alpha)}\left(\left(2m+2\alpha+s+2\right)\kappa_{(m+1)}\!\left(R_s,R_2\right)+\left(2m+2\alpha-s-2\right)\kappa_{(m-1)}(R_s,R_2)\right)\nonumber\\&=&
\left(m(m+2\alpha)+\left(s+2\right)\left(s+3\right)\right)
\kappa\!\left(R_s,R_2\right)-\frac{12s\left(s+1\right)}{m(m+2\alpha)}M_m(s)M_m(2)\nonumber\\&&+
\frac{s\,m(m+2\alpha)}{4(m+\alpha)^2}
\left(M_{m+1}(s)-M_{m-1}(s)\right)
\left(M_{m+1}\!\left(2\right)-M_{m-1}\!\left(2\right)\right).
\end{eqnarray}
Inserting $m$ shifted versions of~(\ref{eq:rkr2mkk+2}) into~(\ref{eq:btodaflow}), we obtain
\begin{equation}\label{eq:bhmixedorigin}
\sum_{i=-1}^{2}A_{i}M_{m+i}\!\left(s+2\right)
=\sum_{i=-2}^{3}B_{i}M_{m+i}\!\left(s\right),
\end{equation}
where $A_i$ and $B_i$ denote coefficients omitted here for brevity. Repeated use of the difference equation~(\ref{eq:M-recurrenceBH}) of shifted $m$ to simplify~(\ref{eq:bhmixedorigin}), the mixed difference equation~(\ref{eq:BHmixed}) is established. 

We now derive the difference equation \eqref{eq:s-recurrenceBH} in~\cite{WHW26} from the difference equations~(\ref{eq:BHmixed}) and~(\ref{eq:M-recurrenceBH}). This is done similarly to the proof of Proposition \ref{prop4} by inserting~(\ref{eq:M-recurrenceBH}) into appropriately $s$-shifted versions of~(\ref{eq:BHmixed}) to eliminate all but terms of dimension $m$. This completes the proof of Proposition~\ref{prop5}.
\end{proof}

\section{Discussion}\label{sec:con}
Beyond average entropies studied in this work, a natural question as posed in the Introduction is if integrable systems methods are able to address the corresponding higher-order cumulants. The discussion here mainly focuses on von Neumann entropy over Hilbert--Schmidt ensemble~(\ref{eq:LUE}), where expressions of the first six cumulants have been obtained via random matrix methods~\cite{Page93,Foong94,Ruiz95,Se96,VPO16,Wei17,Wei20,HWC21,HW26}. Once this setting is understood, results of other ensembles considered may follow in parallel.

If one seeks a direct extension of the tau function~(\ref{eq:tau-introduction}) for an arbitrary $l$-th order cumulant, one may consider the tau function
\begin{equation}\label{eq:tau-discussion}
\tau_m({\bf t},{\bf q};{\bf s})=\frac{Z}{m!}\int_{\mathbb{R}_+^m}\rho(\mathbf{x})\prod_{i=1}^m\e^{\sum_{j=1}^{l}q_j x_i^{s_j}+\sum_{r\geq 1}t_rx_i^r}\dd x_i
\end{equation}
of real-valued power parameters ${\bf s}=(s_1,\dots,s_l)$ that is responsible for generating the joint cumulant \(\kappa(R_{s_1},\ldots,R_{s_l})\). The key observation is that differentiation with respect to power parameters preserves Toda lattice structures inherited from the underlying determinantal identities. This suggests a recursive scheme in the cumulant order $l$ with each step reduced to solving a second-order difference equation in the matrix dimension $m$. However, the recursive scheme does not fully capture the simplicity of higher-order cumulant expressions. In particular, it does not appear sufficient to prove the following conjecture on the difference equation in matrix dimension satisfied by cumulant of any order, inferred from the first six cumulants summarized in~\cite{HW26}.
\begin{conjecture}
For any $l>2$, the $l$-th cumulant $\kappa_{l}(T)$ of the induced entropy~(\ref{eq:T}) satisfies the $l$-th order difference equation
\begin{equation}\label{eq:entropy-cumulant-conjecture}
\Delta^{l}\kappa_{l}(T)=l!\left((l-1)\psi_{l-2}(n)+n\psi_{l-1}(n)\right),
\end{equation}
where we denote $\Delta f_m=f_{m+1}-f_m$.
\end{conjecture}

We notice that the right-hand side of~(\ref{eq:entropy-cumulant-conjecture}) involving polygamma functions~(\ref{eq:polygamma}) is a universal constant independent of $m$, implying that $\kappa_{l}(T)$ is a polynomial in $m$ of degree $l$. Resolving this conjecture requires a more intricate tau function than the simple construction~(\ref{eq:tau-discussion}). Specifically, a promising tau function shall be able to circumvent the need to compute an increasing number of initial conditions as $l$ increases while capturing cumulant structure of the conjecture. This requires that the tau function uncovers relations among cumulants of different orders $l$ in addition to lattice equations in matrix dimension $m$. The underlying deformed orthogonal polynomials of tau functions to be constructed may also play a role. Beyond the limitations of random matrix methods~\cite{HW26}, the ultimate goal of integrable systems approach is to read off the cumulant expression of an arbitrary order from tau functions without resorting to any recursive scheme.

Progress in higher-order cumulants of von Neumann entropy can also be applied to the study of higher-order correlators of spectral moments, extending results in the literature~\cite{CundenMezzadriOConnellSimm,DY17,GGR20,GGR21} from integer-valued to real-valued power parameters. The focus is on establishing recurrence relations among correlators of different power parameters and matrix dimensions instead of their summation representations. Such relations, once established, will provide a systematic framework of finding higher-order cumulants of more general and challenging entanglement metrics including Tsallis entropy, entanglement capacity, and relative entropy.

\begin{appendices}
\section{Trace-scaling identity of joint cumulants}\label{app1}
Here, we summarize in Lemma~\ref{lemma1} below a new cumulant identity useful in simplifying joint cumulants resulting from Toda lattice equations and Virasoro constraints in the main text. Applicable densities of Lemma~\ref{lemma1} include~(\ref{eq:LUE}), (\ref{eq:UBH}), (\ref{eq:LMB}), and (\ref{eq:uBKM}).

\begin{lemma}\label{lemma1}
For a joint density of random variables $\mathbf{x}=\{x_1,\dots,x_m\}$ of the form
\begin{equation}
\rho(\mathbf{x})=
h_m(\mathbf{x})\e^{-R_1}
\end{equation}
with \(h_m\) being a homogeneous function of degree \(d\),
\begin{equation}\label{eq:density-homogeneity}
h_m(c\mathbf{x})=c^{d}h_m(\mathbf{x}),
\end{equation}
a joint cumulant~(\ref{eq:joint-cumulant-def}) of $n$ traces $R_1$ and $l$ spectral moments~(\ref{eq:Rs}) of real-valued parameters \(s_1,\ldots,s_l\) is scaled as
\begin{equation}\label{eq:general-cumulant-scaling}
\kappa\!\left(\underbrace{R_1,\ldots,R_1}_{n},
R_{s_1},\ldots,R_{s_l}\right)=\left(\sum_{i=1}^{l}s_i\right)_{\!\!n}\kappa\left(R_{s_1},\ldots,R_{s_l}\right),
\end{equation}
where $(a)_n=\Gamma(a+n)/\Gamma(a)$ is the Pochhammer's symbol.
\end{lemma}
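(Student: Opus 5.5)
We work with the joint cumulant generating function
\begin{equation*}
K(t,q_1,\ldots,q_l)=\ln\int_{\R_+^m}h_m(\mathbf{x})\prod_{i=1}^m\e^{-(1-t)x_i+\sum_{j=1}^{l}q_jx_i^{s_j}}\dd x_i .
\end{equation*}
By the definition~(\ref{eq:joint-cumulant-def}), the left-hand side of~(\ref{eq:general-cumulant-scaling}) equals $\partial_t^n\partial_{q_1}\cdots\partial_{q_l}K$ evaluated at the origin. The normalising constant of $\rho$ does not matter here, since it only adds a constant to $K$ and is killed by any $q$-derivative.

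The first step is the rescaling $y_i=(1-t)x_i$, valid for $t<1$, exactly as in~(\ref{eq:tau-scaling}) and~(\ref{eq:chagofvariaxy}). Homogeneity~(\ref{eq:density-homogeneity}) gives $h_m(\mathbf{x})=(1-t)^{-d}h_m(\mathbf{y})$, and the Jacobian contributes $(1-t)^{-m}$. Since $x_i^{s_j}=(1-t)^{-s_j}y_i^{s_j}$, this yields
\begin{equation*}
K(t,q_1,\ldots,q_l)=-(d+m)\ln(1-t)+K\bigl(0,q_1(1-t)^{-s_1},\ldots,q_l(1-t)^{-s_l}\bigr).
\end{equation*}

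Next we apply $\partial_{q_1}\cdots\partial_{q_l}$ at $q=0$. The logarithmic term drops out because $l\ge1$. By the chain rule each $\partial_{q_j}$ produces a factor $(1-t)^{-s_j}$. Hence
\begin{equation*}
\partial_{q_1}\cdots\partial_{q_l}K\big|_{q=0}=(1-t)^{-\sigma}\,\kappa(R_{s_1},\ldots,R_{s_l}),\qquad \sigma=\sum_{i=1}^l s_i ,
\end{equation*}
which generalises~(\ref{eq:M-scaling}).

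Then we differentiate $n$ times in $t$. Using $\partial_t^n(1-t)^{-\sigma}=(\sigma)_n(1-t)^{-\sigma-n}$ and setting $t=0$ gives~(\ref{eq:general-cumulant-scaling}). The case $n=0$ is trivial, and $n=1$ reproduces~(\ref{eq:bhM-derivative}).

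The main obstacle is justifying the interchange of differentiation and integration, and the finiteness of the moments of $x^{s}$ for real, possibly negative, $s_j$. We would restrict to $s_j$ for which $\mathbb{E}[|x_i|^{s_j}]$ is finite, which requires $s_j>-(\alpha+1)$ in the Laguerre-type settings. For $q_j$ we take a small complex or real neighbourhood: for $s_j>1$, take $q_j\le 0$, or interpret the identity as formal derivatives at $q=0$. The $x_i^{s_j}$ polynomial-type factors are dominated by $\e^{-(1-t)x_i/2}$ for $|t|<1/2$, and this domination justifies differentiation under the integral.

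For the Muttalib--Borodin ensemble, the homogeneity degree is $d=m(m-1)(1+\theta)/2+m\alpha$. For the Bogoliubov--Kubo--Mori ensemble, the factor $\ln x_i-\ln x_j$ is scale invariant, so homogeneity still holds. In all cases only homogeneity is used, and the explicit value of $d$ never enters the final identity.
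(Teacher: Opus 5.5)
Your proof is correct and follows the paper's argument essentially verbatim. Both use the rescaling $y_i=(1-t)x_i$ with homogeneity to pull out $(1-t)^{-(d+m)}$ and rescale $q_j\to q_j(1-t)^{-s_j}$, differentiate in the $q_j$ to obtain $(1-t)^{-\sigma}\kappa(R_{s_1},\ldots,R_{s_l})$, and then take $n$ derivatives in $t$ at $t=0$. Your remarks on finiteness and on differentiating under the integral sign go beyond what the paper states, but they leave the argument unchanged.
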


\begin{proof}
The tau function that generates the considered joint cumulant is
\begin{equation}\label{eq:general-deformed-partition}
\tau_m(t,\mathbf q;\mathbf s)=\frac{1}{m!}\int_{\R_+^m}h_m(\mathbf{x})\e^{-(1-t)R_1+\sum_{i=1}^{l}q_iR_{s_i}}\prod_{i=1}^{m}\dd x_i.
\end{equation}
The change of variables
\begin{equation}
y_i=(1-t)x_i
\end{equation}
and the homogeneity condition~\eqref{eq:density-homogeneity} give
\begin{equation}\label{eq:general-tau-scaling}
\tau_m(t,\mathbf q;\mathbf s)=(1-t)^{-\gamma}\tau_m\left(0,\frac{q_1}{(1-t)^{s_1}},\dots,\frac{q_l}{(1-t)^{s_l}};\mathbf s\right),
\end{equation}
where
\begin{equation}
\gamma=d+m.
\end{equation}
Differentiating $\ln\tau_m$ with respect to each \(q_j\) before setting \(\mathbf q=\mathbf 0\), we obtain
\begin{equation}\label{eq:cumulant-scaling-intermediate}
\left.\frac{\partial^l}{\partial q_1\cdots \partial q_l}\ln\tau_m(t,\mathbf q;\mathbf s)\right\rvert_{\mathbf q=\mathbf 0}
=(1-t)^{-\sum_{i=1}^{l}s_i}\kappa\left(R_{s_1},\ldots,R_{s_l}\right).
\end{equation}
Finally, applying $\frac{\partial^n}{\partial t^n}$ to both sides of \eqref{eq:cumulant-scaling-intermediate} and then setting \(t=0\), the left-hand side becomes
\begin{equation}
\kappa\left(\underbrace{R_1,\ldots,R_1}_{n},R_{s_1},\ldots,R_{s_l}\right),
\end{equation}
whereas the fact that 
\begin{equation}
\left.\frac{\partial^n}{\partial t^n}(1-t)^{-\sum_{i=1}^{l}s_i}\right\rvert_{t=0}=\left(\sum_{i=1}^{l}s_i\right)_{\!\!n} 
\end{equation}
establishes \eqref{eq:general-cumulant-scaling}. This completes the proof of Lemma~\ref{lemma1}.
\end{proof}
\end{appendices}

\noindent\textbf{Acknowledgment}~~Lu Wei was supported by the U.S. National Science Foundation (2306968) and the U.S. Department of Energy (DE-SC0024631). Peter J. Forrester was supported by the Australian Research Council (DP250102552).\\

\end{document}